\newtheorem{theorem}{Theorem}
\newtheorem{observation}[theorem]{Observation}
\newtheorem{lemma}[theorem]{Lemma}
\newtheorem{proposition}[theorem]{Proposition}
\newtheorem{definition}[theorem]{Definition}
\let\realbfseries=\bfseries
\def\bfseries{\realbfseries\boldmath}
\newcommand{\marrow}{\marginpar[\hfill$\longrightarrow$]{$\longleftarrow$}}
\newif\ifComments
  \newcommand{\says}[2]{\textcolor{red}{\textsc{#1 says:} \marrow\textsf{#2}}}
  \newcommand{\says}[2]{\relax}
\newcommand{\MIT}{leapfrog}
\newcommand{\aux}{musketeer}
\newcommand{\auxes}{musketeers}
\newcommand{\Aux}{Musketeer}
\newcommand{\p}{$\Gamma$\ }
\newcommand{\pa}{I\ }
\newcommand{\pat}{Z\ }
\newcommand{\facet}{facet}
\newcommand{\Facet}{Facet}
\newcommand{\afacet}{a facet}
\title{
Universal Reconfiguration of \Facet-Connected Modular Robots by Pivots:
The $O(1)$ Musketeers}
\author[1]{Hugo A. Akitaya\thanks{Email: hugo.alves\_akitaya@tufts.edu. Supported by NSF CCF-1422311 and CCF-1423615.}}
\author[2]{Esther M. Arkin\thanks{Email: esther.arkin@stonybrook.edu. Partially funded by by NSF (CCF-1526406).}}
\author[3]{Mirela Damian\thanks{Email: mirela.damian@villanova.edu}}
\author[4]{Erik D. Demaine\thanks{Email: edemaine@mit.edu. Supported in part by NSF ODISSEI grant EFRI-1240383 and NSF Expedition grant CCF-1138967.}}
\author[5]{Vida Dujmovi\'c\thanks{Email: vida.dujmovic@uottawa.ca}}
\author[6]{Robin Flatland\thanks{Email: flatland@siena.edu}}
\author[1]{Matias Korman\thanks{Email: Matias.Korman@tufts.edu}}
\author[7]{Bel\'en Palop\thanks{Email: belen.palop@uva.es. Partially supported by MTM2015-63791-R (MINECO/FEDER).}}
\author[8]{Irene Parada\thanks{Email: iparada@ist.tugraz.at. Supported by the Austrian Science Fund (FWF): W1230.}}
\author[9]{Andr\'e van Renssen\thanks{Email: andre.vanrenssen@sydney.edu.au. Supported by JST ERATO Grant Number JPMJER1201, Japan.}}
\author[10]{Vera Sacrist\'an\thanks{Email: vera.sacristan@upc.edu. Partially supported by MTM2015-63791-R (MINECO/FEDER) and Gen. Cat. DGR 2017SGR1640.}}
\affil[1]{Tufts University, USA}
\affil[2]{State University of New York at Stony Brook, USA}
\affil[3]{Villanova University, USA}
\affil[4]{Massachusetts Institute of Technology, USA}
\affil[5]{University of Ottawa, Canada}
\affil[6]{Siena College, USA}
\affil[7]{Universidad de Valladolid, Spain}
\affil[8]{Graz University of Technology, Austria}
\affil[9]{The University of Sydney, Australia}
\affil[10]{Universitat Polit\`ecnica de Catalunya, Spain}
\begin{document}

\maketitle

\vspace{-2em}
\begin{abstract}
	We present the first universal reconfiguration algorithm
	for transforming a modular robot between any two \facet-connected
	square-grid configurations using pivot moves.
	More precisely, we show that five extra ``helper'' modules (``musketeers'')
	suffice to reconfigure the remaining $n$ modules
	between any two given configurations.
	Our algorithm uses $O(n^2)$ pivot moves, which is worst-case optimal.
	Previous reconfiguration algorithms either require less restrictive
	``sliding'' moves, do not preserve \facet-connectivity,
	or for the setting we consider, could only handle a small
	subset of configurations defined by a local forbidden pattern.
	Configurations with the forbidden pattern do have disconnected
	reconfiguration graphs (discrete configuration spaces), and indeed
	we show that they can have an exponential number of connected components.
	But forbidding the local pattern throughout the configuration is far
	from necessary, as we show that just a constant number of added modules
	(placed to be freely reconfigurable)
	suffice for universal reconfigurability.
	We also classify three different models of natural pivot moves that
	preserve \facet-connectivity, and show separations between these models.
\end{abstract}

\section{Introduction}\label{sec:intro}

\emph{Shape shifting} is a powerful idea in science fiction:
T-1000 robots (from \textit{Terminator 2: Judgement Day}),
Changelings (from \textit{Star Trek: Deep Space 9}),
Symbiotes (from \textit{Venom}),
Mystique (from \textit{X-Men}), and
Metamorphagi (from \textit{Harry Potter})
all have the ability to transform their shape nearly arbitrarily.
How can we make shape shifting into science?

\emph{Modular robots} \cite{survey2017,self-reconfigurable,survey2007}
are perhaps the best answer to this question.
The idea is to build a single ``robot'' out of many small units called
\emph{modules}, each of which can attach and detach from each other,
move relative to each other, communicate with each other, and compute.
Modular robots offer extreme adaptability to changing environment
or user needs, in particular by reconfiguring the modules into
exponentially many effective shapes of the overall robot.
Modularity also offers a practical future for manufacturing
(identical modules can be mass-produced, making them relatively cheap), 
makes robots easy to repair by just replacing the broken modules,
and makes it possible to re-use components from one robot/task to another.

For computational geometry, modular robots offer exciting challenges:
what shapes can a modular robot self-reconfigure into, and what are
good algorithms for reconfiguration?
According to \cite{self-reconfigurable},
the main difficulties in self-reconfiguration
are the physical motion constraints of the modules themselves,
connectivity requirements for the robot to hold together,
collisions between moving and/or static modules, and ``deadlocks''
where no module can move or some module gets ``trapped''
within the configuration.

The wide diversity of mecatronic solutions to modular robots can be
characterized from a geometric viewpoint by three key properties:
(1) the lattice, (2) connectivity requirement, and
(3) allowed moves.

\paragraph{Lattice.}
Most modular robots follow a space-filling lattice structure
(e.g., squares or hexagons in 2D, or cubes in 3D),
to simplify both reconfiguration and
the characterization of possible shapes.
Pure lattice modular robots
\cite{EMCube,Metamorphic,3DFractum,Fractum,Crystal,M-blocks}
have one robot per lattice element and always remain on the lattice,
while hybrid modular robots \cite{M-tran,Atron,SuperBot,Molecube}
also allow units move out of the lattice.
We focus here on the well-studied square lattice, though we suspect our
results can be generalized to cube lattices.

\paragraph{Connectivity requirement.}

A modular robot generally needs to be connected at all times while
reconfiguring, so that the modules do not fall apart.
The most common and practical constraint is that the modules are always
\emph{\facet-connected}, meaning a connected \emph{\facet-adjacency graph}
where vertices represent modules and edges represent adjacencies by
shared facets (edges in 2D).
The exception is that the moving module is excluded from this graph
during each move, meaning that other modules must be \facet-connected
while the moving module may briefly disconnect during the move.
A weaker connectivity constraint,
considered in some theoretical research \cite[Ch.~4]{nadia},
is that the robot is connected via shared vertices.
In such case, reconfiguration is always possible.
We focus here on the more challenging facet-connectivity constraint.

\paragraph{Allowed moves.}

One of the most popular models is \emph{sliding squares/cubes}
\cite{pushing-cubes,pushing-squares,MeltGrow},
illustrated in Figure~\ref{fig:slide-pivot} (left).
In this case, modules live in a square or cube lattice,
move by sliding relative to each other, and require \facet-connectivity.
For this model, universal reconfiguration is possible between any two
\facet-connected configurations, in any dimension
\cite{pushing-cubes,pushing-squares}. 
\begin{figure}[hbt]
	\centering
	\includegraphics[page=1,width=.6\textwidth]{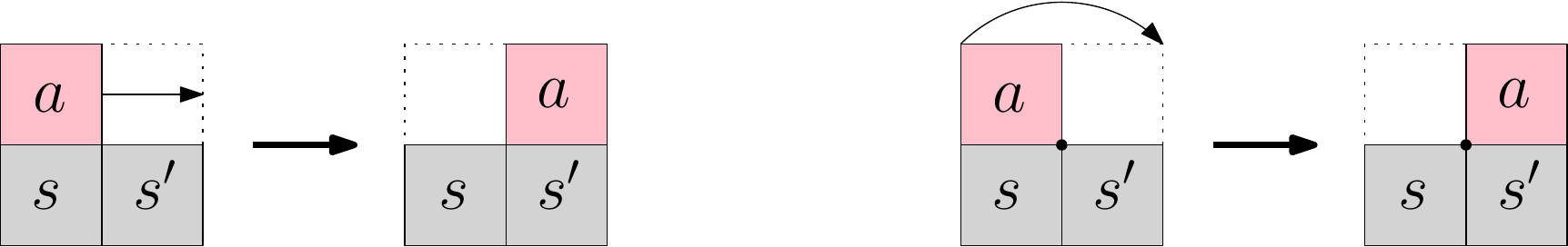}
	\caption{Two ways a module $a$ starting above module $s$ can move to the adjacent lattice position, above module $s'$.
		Left: sliding. Right: pivoting.
		Pivoting requires more free space to execute.}
	\label{fig:slide-pivot}
\end{figure}

We focus here on a more challenging model,
\emph{pivoting squares/cubes} \cite{nadia,M-blocks,ICubes},
illustrated in Figure~\ref{fig:slide-pivot} (right).
In this case, modules live in a square or cube lattice, move by rotating
relative to each other, and require facet-connectivity.
The key difference is that a module needs two additional squares/cubes
of empty space in order to pivot, whereas a slide just needs the
destination square/cube to be empty.
Unfortunately, some configurations are \emph{rigid} in this model,
meaning that no module can move without disconnecting the robot. 

Rigid configurations appear also in the sliding square model when the sliding capability is restricted to turning corners~\cite{PSPACE-sliding-corners}. 
However, in this model the existence of free space around the modules does not guarantee reconfigurability, 
while in the pivoting squares model it does, 
as we will show later. 

As a consequence, all known reconfiguration algorithms for pivoting
squares/cubes are somehow partial.
One algorithm follows some heuristics without a termination guarantee
\cite{heuristics-square}
(see also \cite{heuristics} for heuristics for hexagons).
A recent algorithm guarantees reconfiguration by forbidding one
or more local patterns in both the start and goal configurations
\cite{M-blocks}, essentially preventing narrow holes in the shape.
(A similar result was obtained for hexagons \cite{density}.)
These assumptions severely restrict the possible shapes
that can be reconfigured, to a $o(1)$ fraction. 
The absence of such local patterns though 
is far from being necessary for reconfigurability. 
In 3D, some further strong conditions are added,
such as that every hole must be orthogonally convex~\cite{M-blocks}.

\paragraph{Our results.}

Our main result is that \emph{universal} reconfiguration is possible if we allow the
addition of a constant number of (five) extra ``helper'' modules,
which we call \emph{musketeer modules}.%
\footnote{\textit{The Three Musketeers} is a story about four musketeers.
	This paper is a story about five musketeers.}
A similar idea was recently applied to a slightly different model of programmable matter in~\cite{PSPACE-sliding-corners}, where helpers are called \emph{seeds}. 
The key is that these helper modules are not considered part of the
initial or target shape, and thus we are free to place them where we like
(in particular, along the external boundary of the robot).
Surprisingly, this small amount of additional freedom is enough to achieve
universal reconfiguration.
In fact, we prove in Section~\ref{sec:algorithm} that
five musketeer modules are both sufficient and sometimes
necessary to solve any reconfiguration under our strategy.
Our algorithm is based on the old idea of following the
right-hand rule to escape a maze~\cite{right-hand}.
The number of pivoting moves it makes is $O(n^2)$, which is optimal in the
worst case by an earth-moving lower bound:
each robot may need to move a distance of $\Theta(n)$.

This result can be seen as proving connectivity of the
\emph{reconfiguration graph} ${\cal G}_{n,k}$, where vertices represent
\facet-connected configurations of $n$ modules and edges represent
valid pivot moves, with the addition of $k \geq 5$ musketeer modules.
With $k = 0$ musketeers, ${\cal G}_{n,k}$ is known to be disconnected.
Surprisingly, there have been no (successful) attempts to understand the
structure of this reconfiguration graph.
In Section~\ref{sec:reconfigraph}, we analyze the structure of this
reconfiguration graph.
Specifically, we prove that ${\cal G}_{n,0}$ can have an
exponential number of connected components of exponential size,
and in some models, can have an exponential number of singleton
connected components (rigid configurations); while in other models,
the reconfiguration graph cannot have any singleton connected components.

One other main contribution of this paper is to precisely define a
variety of natural models for pivot moves.
Pivoting is naturally defined as the rotation of one module
about one of its vertices that is shared with a (static) module. 
But there are some subtleties in this definition depending on exactly
which modules must be \facet-connected at what times.
(Obviously, the moving module is not \facet-connected to the
others during the move.)
In Section~\ref{sec:setting}, we define three nested models, each
at least as powerful as the previous,
and in Section~\ref{sec:separation}, we prove strict separations
between these models. 
Our analysis of connected components in the reconfiguration space
(in Section~\ref{sec:reconfigraph})
also considers the effects of these different models.
We conclude with open problems in Section~\ref{sec:conclusions}.

\section{Models and definitions}\label{sec:setting}

\subsection{Pivot moves}

In a square grid, the fact that two squares may share a vertex without actually sharing an edge opens a wider range of possibilities for the pivoting move. Refer to Figure~\ref{fig:square-pivot}.
\begin{figure}[bth]
	\centering
	\includegraphics[page=2,width=.6\textwidth]{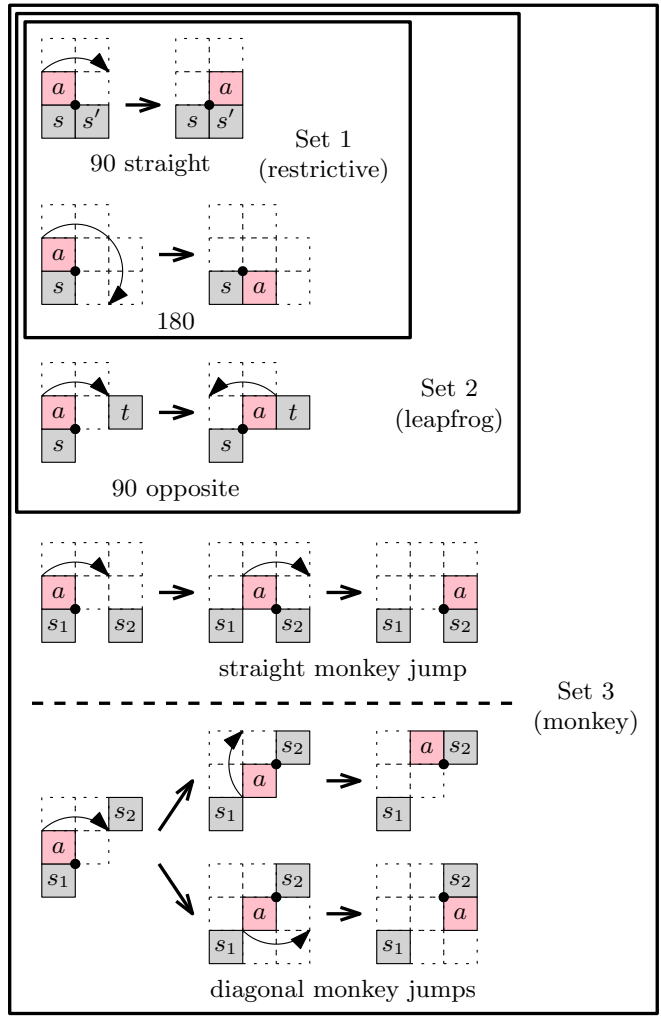}
	\caption{The possible sets of moves for a pivoting module $a$ about a module $s$, in a square grid.}
	\label{fig:square-pivot}
\end{figure}
The most restrictive set of moves (Set 1 in Figure~\ref{fig:square-pivot}) requires module $a$ to be \facet-adjacent to module $s$ and to rotate about one of the two vertices of the edge they share. Such move can be a $90^\circ$ or a $180^\circ$ rotation, depending on whether or not $s$ has a neighboring module $s'$ adjacent to it through the other edge of $s$ incident to the rotation center, and of course, requires the goal grid position to be empty and some intermediate positions to be (at least partially) clear. These cells are depicted in white in Figure~\ref{fig:square-pivot}.

The authors of \cite{M-blocks} propose an expanded set of moves (Set 2 in Figure~\ref{fig:square-pivot}) that allows module $a$ to rotate $90^\circ$ about module $s$ even when $s'$ is not present, as long as module $a$ is again \facet-adjacent to another module $t$ at the end of the move. Since their reconfiguration algorithm relies on reversible moves, this implies allowing also the reverse move: module $a$ can rotate $90^\circ$ about a vertex of another module $s$ incident to $a$, without requiring $s$ to be \facet-adjacent to $a$, as long as $a$ is \facet-adjacent to some module before performing the move and after performing the move. We call this enlarged set the \emph{leapfrog} set of moves.

If the previous move is allowed (i.e., if it is feasible for a given modular robot prototype), it seems natural to allow concatenating more than one of such moves, i.e., to allow concatenating consecutive rotations about vertices incident to the pivoting module. It is easy to prove that such concatenation cannot involve more than two pivots
before the moving module becomes \facet-adjacent to another module. Indeed, if a module $a$ is \facet-adjacent to a module $s_1$, after at most two such moves it necessarily becomes adjacent to a module $s_2$ (Set 3 in Figure~\ref{fig:square-pivot}). We call this complete set the \emph{monkey} set of moves.

It is worth noticing that diagonal monkey jumps may be unnecessary for most purposes, since a diagonal monkey jump can be simulated by ignoring the presence of module $s_2$ and keep rotating about $s_1$ and its neighboring modules. Naturally, this would imply performing a higher number of pivoting moves. We further discuss this issue in Section~\ref{sec:outer-shell}.   

\subsection{Reconfiguration problem}
Consider a configuration $C$ of $n$ robot modules in a given grid. 
The \emph{\facet-adjacency graph} of $C$ has a node for each module, and an edge
between a pair of nodes if the corresponding modules are \facet-adjacent. Throughout this paper we will often refer to the \facet-adjacency graph simply as the
\emph{adjacency graph}. We will say that a configuration $C$ is \emph{\facet-connected} if the adjacency graph of $C$ is connected.

Applying a pivot move from one of the three sets of moves described in the previous section to \afacet-connected configuration $C$, means applying one of the moves to a module in $C$, in such a way that the configuration (without the pivoting module) stays \facet-connected before, after, and during the move, and the pivoting module does not collide with any other module.
Note that this implies that even after deleting the moving module the configuration remains \facet-connected.
Reconfiguring $C$ consists of applying a concatenation of such moves. 

The (universal) reconfiguration problem asks whether it is possible to reconfigure any \facet-connected configuration of $n$ modules in a given grid into any other configuration with the same number of modules.

For any positive integer $n$, the \emph{reconfiguration graph} ${\cal G}_n$ has a node for each \facet-connected configuration with $n$ modules, and an edge between two nodes if the corresponding configurations can be reconfigured into each other through a single pivoting move.  
We call \emph{rigid} any configuration in which no module can move, i.e., any configuration that is an isolated node of ${\cal G}_n$, forming a connected component that is a singleton. We call \emph{locked} any configuration that cannot be reconfigured into a straight strip of modules, i.e., any configuration belonging to a connected component of ${\cal G}_n$ that does not contain a strip.

\section{Reconfiguration graph}\label{sec:reconfigraph}

Figure~\ref{fig:square-rigid} (left) shows an example of a configuration that is rigid under the largest possible set of pivoting moves (Set 3 in Figure~\ref{fig:square-pivot}).
\begin{figure}
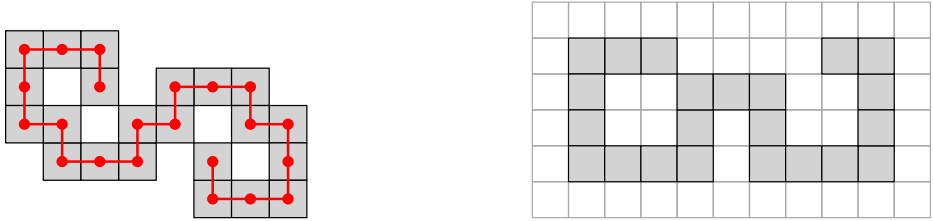

	\centering
	\includegraphics[page=3,width=.25\textwidth]{Arxiv_musket_figures.pdf}
	\hspace{5em}
	\qquad
	\includegraphics[page=4,width=.33\textwidth]{Arxiv_musket_figures.pdf}
	\caption{Left: a rigid configuration of edge-connected pivoting squares.
		Right: A configuration that can be reconfigured into a strip, in spite of containing instances of the three forbidden patterns.}
	\label{fig:square-rigid}
\end{figure}
In~\cite{M-blocks} it is proved that reconfiguration
for Set 2 of pivoting moves (leapfrog moves) is possible between two \facet-connected configurations of the same number of squares, provided that they are both \emph{admissible} shapes.
Admissibility is defined in terms of forbidden patterns: \afacet-connected configuration of squares is admissible if it does not contain any of the three patterns---\p\hspace{-0.222 em}, \pa\hspace{-0.222 em}, and \pat\hspace{-0.222 em}---depicted in Figure~\ref{fig:square-pattern}.
However, this local separation condition is certainly not necessary, as proved by the example in Figure~\ref{fig:square-rigid} (right). 

\begin{figure}[tb]
	\centering
	\includegraphics[page=5,width=\textwidth]{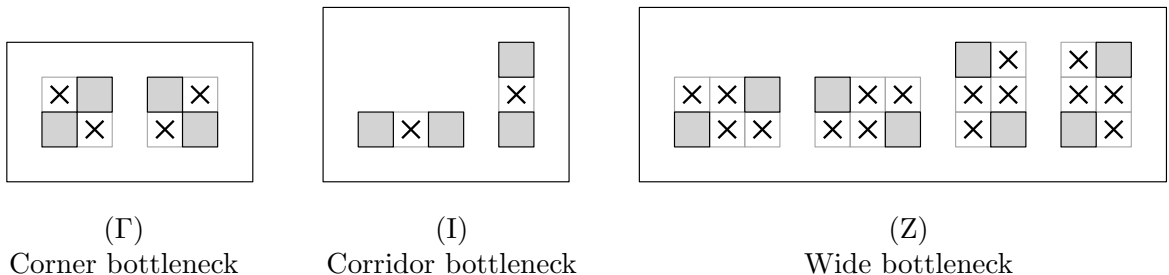}
	\caption{The three forbidden patterns for \facet-connected pivoting squares; solid squares
		represent modules, and $\times$-ed squares represent empty spaces.}
	\label{fig:square-pattern}
\end{figure}

These results raise several natural questions for \facet-connected pivoting squares: Are the three sets of moves equivalent? In particular, is reconfigurability between admissible shapes also guaranteed when using the most restrictive set of pivoting moves?
This latter question has been answered positively by the results from~\cite{M-blocks}. Although not explicitly stated, the reconfiguration algorithm from~\cite{M-blocks} uses only restrictive moves. 

Several other interesting questions are open.
Can the admissible condition be relaxed when using the largest set of pivoting moves? Do there exist rigid configurations that contain only one type of pattern? If so, are they rigid with respect to all three sets of pivoting moves? What can we say about the reconfiguration graph ${\cal G}_n$ for the different sets of pivoting moves? We try to answer these questions in the remaining of this section. 

\subsection{Separation between the different sets of moves}
\label{sec:separation}
We start by showing that the three sets of moves for pivoting squares are not equivalent, as they produce three different reconfiguration graphs.

\begin{proposition}
	\label{prop:prop1}
	The monkey set of moves for pivoting squares (Set 3) is stronger than the leapfrog set (Set 2), and
	the leapfrog set is stronger than the restrictive set (Set 1).
	That is, the resulting reconfiguration graph ${\cal G}_n$
	has strictly fewer connected components for Set 3 than for Set 2,
	and fewer connected components for Set 2 than for Set 1.
\end{proposition}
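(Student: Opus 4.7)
The plan rests on a monotonicity observation together with two small separating constructions. Directly from the definitions, every move valid under Set~1 is also valid under Set~2 (Set~2 simply permits the additional $90^\circ$ rotations about a vertex of $s$ when the auxiliary neighbor $s'$ is absent, as long as $a$ ends \facet-adjacent to some module $t$), and every move valid under Set~2 is valid under Set~3 (the monkey set is closed under treating a single pivot as a degenerate length-one concatenation). Hence the vertex set of $\mathcal{G}_n$ is the same in all three models, while the edge set grows monotonically from Set~1 to Set~2 to Set~3. Adding edges to a graph can only merge connected components, which already gives $\mathrm{comp}_1(\mathcal{G}_n) \geq \mathrm{comp}_2(\mathcal{G}_n) \geq \mathrm{comp}_3(\mathcal{G}_n)$.

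For the strict inequality $\mathrm{comp}_2 < \mathrm{comp}_1$, I would exhibit a concrete configuration $C_{12}$ that admits at least one leapfrog move (a $90^\circ$ rotation with $s'$ absent, landing \facet-adjacent to a different module $t$) but no Set~1 move at all. The crucial local gadget is a tip module $a$ whose unique \facet-neighbor is $s$, positioned so that both $180^\circ$ Set~1 targets at the two vertices of the $a$-$s$ edge are blocked---either by an occupied target cell or by producing a disconnected configuration---while the $90^\circ$ leapfrog target is free and lands adjacent to $t$. Padding the remainder of $C_{12}$ with a rigid block whose local boundary forbids every Set~1 rotation elsewhere makes $\{C_{12}\}$ a singleton component of $\mathcal{G}_n^{(1)}$, yet the leapfrog is an edge incident to $C_{12}$ in $\mathcal{G}_n^{(2)}$, strictly reducing the component count.

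For $\mathrm{comp}_3 < \mathrm{comp}_2$ I would apply the same template with a configuration $C_{23}$ rigid under Set~2 but admitting a diagonal monkey jump: two consecutive $90^\circ$ rotations of a module $a$ about vertices of two distinct modules $s_1$ and $s_2$, such that the intermediate position of $a$ is not \facet-adjacent to anything (so this intermediate state is forbidden in Set~2). A natural design places $a$ at the inside corner of an L-shaped corridor in which only this two-step trajectory is unobstructed. The subtlety, flagged in the discussion of the monkey set, is that such a jump can sometimes be simulated by several leapfrog moves around $s_1$ and its neighbors; requiring $C_{23}$ to be globally Set~2-rigid eliminates any simulation at the source, so the Set~3 edge is the only way out of $C_{23}$ and the component count drops again.

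The chief obstacle, in both constructions, is the combinatorial bookkeeping needed to rule out every move of the weaker model. I expect Set~2-rigidity to be the harder verification, since Set~2 contains all Set~1 moves together with the extra $90^\circ$ leapfrogs, so each module must be checked against a strictly richer list of potential escapes; moreover, any padding added to block escape routes at $a$ must itself be immobile under Set~2 and must not accidentally create a new leapfrog target elsewhere in the configuration. Once small enough gadgets are drawn, the remainder is a finite case analysis per module, after which the monotonicity observation from the first paragraph completes the proof.
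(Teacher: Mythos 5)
Your high-level framework is sound: the three sets of moves are nested, so the edge set of $\mathcal{G}_n$ grows monotonically from Set~1 to Set~2 to Set~3, and adding edges can only merge components; the remaining work is to exhibit one separating configuration per inclusion. The paper implicitly relies on the same monotonicity and likewise argues by exhibiting separators. For the Set~2/Set~3 separation your plan coincides with the paper's: a configuration that is rigid under leapfrog moves but escapes via a monkey jump.

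For the Set~1/Set~2 separation you propose a strictly stronger kind of separator than the paper actually gives. You want a configuration $C_{12}$ that is a \emph{singleton} component of $\mathcal{G}_n$ under Set~1 yet non-singleton under Set~2, i.e.\ rigid under Set~1 but with a leapfrog escape. The paper's separator is weaker and easier to verify: it is a pair of configurations, each containing a single movable module (the rest being immobile), such that the module's reachable set of boundary cells under Set~1 is strictly smaller than under Set~2; the two configurations lie in distinct Set~1 components but in the same Set~2 component. That route avoids having to prove Set~1-rigidity, which is a burdensome per-module check. Your route would also prove the proposition if the construction goes through, but it is harder than necessary and you have not actually established that such a configuration exists.

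That is the genuine gap: for this proposition the proof essentially \emph{is} the pair of separating constructions, and your proposal defers both of them. You give a sensible design sketch and correctly anticipate the main pitfalls (blocking every weaker-model move without accidentally enabling a new one, and the fact that a diagonal monkey jump can sometimes be simulated by a chain of leapfrog moves unless the source is globally Set~2-rigid). But without concrete configurations and the finite per-module case analysis, the argument is a plan rather than a proof. The monotonicity observation, while worth stating explicitly, is the trivial part; the real content lives entirely in the examples you left unbuilt.
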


\begin{proof}
	To show the first inequality consider the two configurations from Figure~\ref{fig:moves-different-1-2}, which include a single module that  can pivot without disconnecting the configuration (shaded pink).
	This module can pivot along some piece of the boundary that is different depending on the pivoting moves allowed.
	If the leapfrog pivoting moves (Set 2) are used, 
	the two configurations belong to the same connected component of  ${\cal G}_n$,
	but not if only the restrictive pivoting moves (Set 1) are allowed.
	\begin{figure}[tb]
		\centering
		\includegraphics[page=6,width=.8\textwidth]{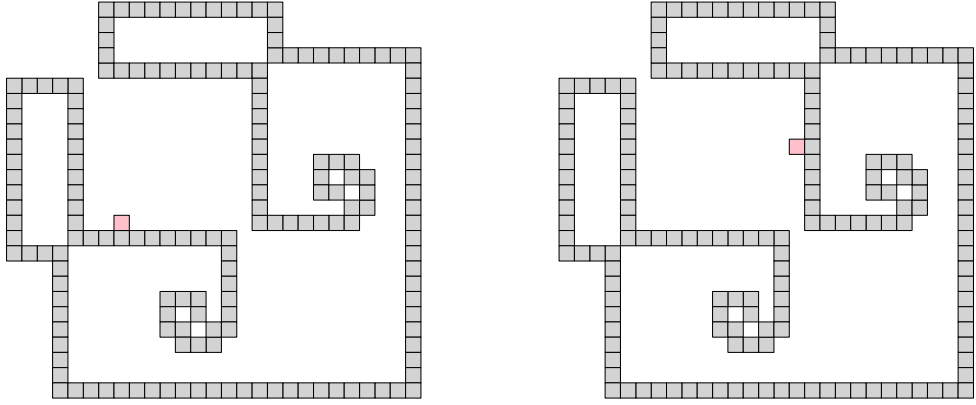}
		\caption{Two shapes that can be reconfigured into each other using the leapfrog set of moves, but
			cannot be reconfigured into each other using the restrictive set of moves.}
		\label{fig:moves-different-1-2}
	\end{figure}
	For the second inequality consider the configuration in Figure~\ref{fig:moves-different-2-3}.
	It is rigid for the leapfrog set of moves (Set 2), but it can easily be reconfigured into a strip using the monkey moves (Set 3).
	\begin{figure}[tb]
		\centering
		\includegraphics[page=7,width=.6\textwidth]{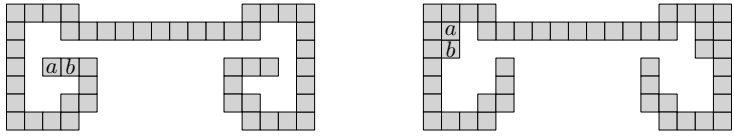}
		\caption{A shape that can be reconfigured into a strip after moving modules $a$ and $b$ using the monkey set of moves, but it cannot be reconfigured using the leapfrog set of moves.}
		\label{fig:moves-different-2-3}
	\end{figure}
\end{proof}

Let us now discuss the differences between the three forbidden patterns (depicted in Figure~\ref{fig:square-pattern}). 
From a purely geometric viewpoint, pattern \p produces a (corner) bottleneck along the boundary of a configuration that is narrower than the one produced by pattern \pa (corridor bottleneck). This one is in turn narrower than the one produced by pattern \pat (wide bottleneck).
The next propositions show how the presence or the absence of each of such patterns influences reconfiguration under each of the 3 sets of pivoting moves.

\subsection{Pattern \p\hspace{-0.222 em}: corner bottleneck}
We start by showing that pattern \p alone suffices to make a configuration rigid, regardless of the set of pivoting moves used (restrictive, leapfrog, or monkey).

\begin{proposition}
	Let ${\cal G}_n$ be the reconfiguration graph of \facet-connected pivoting squares.
	If only pattern \p is allowed, while patterns \pa and \pat are forbidden,
	the number of connected components of ${\cal G}_n$ that are singletons
	and the number of connected components of ${\cal G}_n$ of exponential size are both exponential, regardless of the set of pivoting moves used.
\end{proposition}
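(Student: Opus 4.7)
The plan is to exhibit two explicit families of \facet-connected configurations of $n$ modules that contain only copies of pattern \p and no instance of \pa or \pat: one family of $2^{\Omega(n)}$ rigid configurations (singletons of ${\cal G}_n$) and one family of $2^{\Omega(n)}$ configurations, each sitting in a component of ${\cal G}_n$ of size $2^{\Omega(n)}$. Because the monkey set subsumes the leapfrog set which subsumes the restrictive set, a configuration rigid under the monkey set is rigid under all three, and a component that is already exponentially large under the restrictive set is at least as large under the richer move sets; provided the distinguishing invariant between the exponentially many components is preserved even under monkey moves, both bounds then hold simultaneously for all three move models.

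For the singleton family, the first step is to design a constant-size \emph{link gadget} that admits at least two distinct orientations (say, a shape and its mirror image) and whose outer boundary, when the gadget is plugged into a rigid spine, produces only instances of \p and neither \pa nor \pat. Chaining $m=\Theta(n)$ independently oriented copies of this gadget along the spine yields $2^{m}=2^{\Omega(n)}$ configurations. Rigidity of each configuration under the monkey set is to be verified by a finite case analysis at each boundary module: every one-step pivot destination must be blocked by a present module, every two-step monkey destination must likewise be blocked or produce disconnection, and any remaining candidate move would have to cross a \p corner, which by definition is too narrow for a pivot or a monkey hop without turning \p into \pa or \pat in an intermediate state. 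Rigidity under the monkey set implies rigidity under the two weaker sets, so each configuration is a singleton component in all three reconfiguration graphs.

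For the exponentially many components of exponential size, I would modify the previous construction by replacing one end of the spine with a \emph{free tail}: a strip of modules whose tip pivots along the outer boundary of the spine, in a region where empty space is abundant and neither \pa nor \pat is ever created. The rigid spine remains rigid during every pivot sequence, so the orientation pattern of its $m-1$ links is an invariant of the connected component, giving $2^{m-1}=2^{\Omega(n)}$ distinct components, each preserved under all three move sets. Inside one such component, the tip of the tail visits $\Omega(n)$ distinct positions along the outside of the spine, and the intermediate shapes of the tail can be encoded by binary choices at each step, yielding $2^{\Omega(n)}$ configurations per component already under the restrictive set, hence also under the leapfrog and monkey sets.

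The main technical obstacle is verifying rigidity of the link gadget under the monkey set: a monkey move is a sequence of two pivots that may temporarily leave the moving module without a \facet-neighbor, so blocking every one-step pivot is not enough. I expect to handle this by placing extra blocker modules at the diagonal cells across each \p corner of the gadget, and then shaping the surrounding area so that these blockers produce only additional \p patterns and no \pa or \pat. A secondary concern is the interfaces between consecutive links and between the last link and the free tail, where I would insert constant-size spacer modules designed so that the entire interface contains only \p patterns, which also restricts the tail to motions that preserve the $\Gamma$-only invariant throughout the component.
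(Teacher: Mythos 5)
Your overall plan is in the spirit of the paper's: a ``spine'' whose shape is flexible in $\Theta(n)$ binary ways gives the exponential count of components, and a gadget that leaves a few modules locally mobile inflates each component's size. The paper realizes this exactly (its Figure~\ref{fig:pattern-1-locked}) and the crucial step is \emph{how} the intra-component size is obtained: the paper confines each mobile ``pink'' module to an \emph{inner hole}, where it can occupy a constant number of positions independently of the others, giving $\Omega(8^k)$ by independence, and crucially where it can never touch the spine in a new way (so it cannot ever make a spine module non-cut or unblocked).

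Your intra-component mechanism is genuinely different and has a gap. You propose a ``free tail'' whose tip pivots along the \emph{outer} boundary of the rigid spine and assert that the tail's intermediate shapes yield $2^{\Omega(n)}$ configurations per component under the restrictive set. Two problems. First, this count is unsupported: a tail pivoting along a line of abundant empty space gives only linearly many tip positions; to get exponentially many reachable \emph{shapes} of the tail you would need to argue it can branch or coil in $\Theta(n)$ independent ways, which you never construct, and under the restrictive set a single strip attached to a wall does not obviously do this. Second, and more seriously, you assert without argument that ``the rigid spine remains rigid during every pivot sequence.'' As the tail tip sweeps the outer boundary it becomes facet-adjacent to spine modules; at such a moment a spine module that was a cut vertex can acquire an alternate connection through the tail and cease to be a cut vertex, or a previously blocked pivot of a spine module can become unblocked. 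Either event lets the spine change shape, which collapses the very invariant (the spine's orientation pattern) you are using to distinguish $2^{\Omega(n)}$ components. The paper sidesteps this entirely by keeping the mobile modules inside holes, where they cannot create new adjacencies to the exterior spine. Your rigidity plan (blocker modules across each $\Gamma$ corner) is plausible but remains a sketch; the paper gives an explicit configuration. Finally, note the proposition does not require every configuration in the component to be $\Gamma$-only, so your ``secondary concern'' about the tail preserving the $\Gamma$-only invariant throughout the component is not actually needed.
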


\begin{proof}
	Figure~\ref{fig:pattern-1-locked} (top) shows a rigid configuration.
	Notice that the dark gray path 
	can be configured in $\Omega(2^n)$ ways, since each pair of consecutive dark-gray modules can be connected at least in two different ways (East-West and North-South, for example).
	
	We can modify this construction to obtain the locked configuration from Figure~\ref{fig:pattern-1-locked}, where each of the pink modules can pivot inside a 
	hole. 
	First,  no matter where the pink modules sit, none of the gray modules can move.
	Therefore, if the number of such inner holes of the configuration is $k$, the size of the corresponding connected component of ${\cal G}_n$ is $\Omega(8^k)$.
	Moreover, each of the light pieces creating an inner hole has at most 72 modules.
	Second, if the path of dark modules has length $p$, the number of different connected components that are obtained is $\Omega(2^p)$, since  again
	each pair of consecutive dark modules can be connected at least in two different ways. 
	The constructions 
	Making $k := \lfloor\varepsilon n / 73\rfloor$ and $p := \lfloor(1-\varepsilon)n\rfloor$
	for any $0 < \varepsilon < 1$, we have that asymptotically there are $\Omega(2^{n})$ components of $2^{\Omega(n)}$ size.
	\begin{figure}[tb]
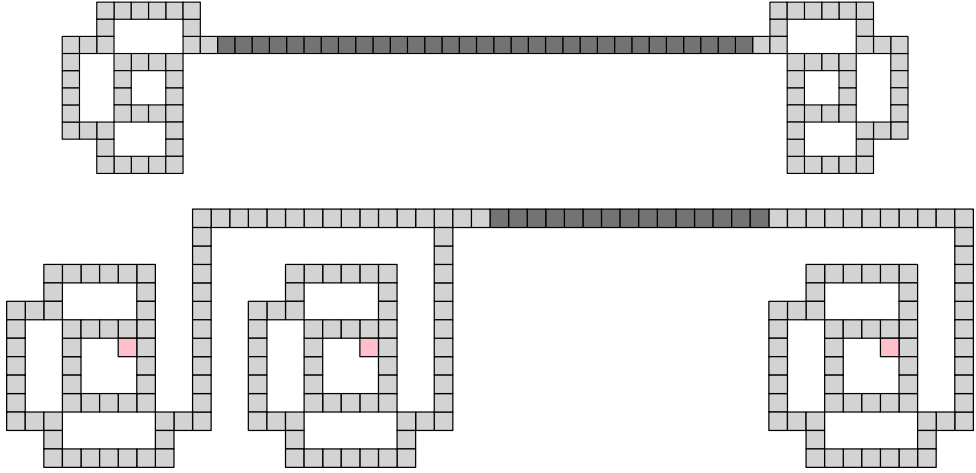

		\centering
		\includegraphics[page=8,width=.7\textwidth]{Arxiv_musket_figures.pdf}
		\bigskip
		
		\includegraphics[page=9,width=.8\textwidth]{Arxiv_musket_figures.pdf}
		\caption{Configurations showing only the forbidden pattern \p\hspace{-0.222 em}.
			Top: Rigid configuration for any set of pivoting moves.
			Bottom: Locked configuration for any set of pivoting moves.}
		\label{fig:pattern-1-locked}
	\end{figure}
\end{proof}

\subsection{Pattern \pa\hspace{-0.222 em}: corridor bottleneck}
The forbidden pattern \pa is weaker than pattern \p in the sense that it suffices to make a configuration rigid for the sets of moves 1 and 2 (restrictive and \MIT). However, if the entire Set 3 of moves is allowed, pattern \pa alone cannot make a configuration rigid, as shown by Proposition~\ref{prop:iSing} below.

\begin{proposition}
	Let ${\cal G}_n$ be the reconfiguration graph of \facet-connected pivoting squares under sets 1 and 2 of pivoting moves (restrictive and leapfrog).
	If only pattern \pa is allowed, and patterns \p and \pat are forbidden,
	the number of connected components of ${\cal G}_n$ that are singletons
	and the number of connected components of ${\cal G}_n$ of exponential size are both exponential.
\end{proposition}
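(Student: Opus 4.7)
The plan is to mimic the proof of the previous proposition for pattern \p\hspace{-0.222 em}, but so that the only forbidden pattern appearing in every configuration is \pa\hspace{-0.222 em}. The core ingredient is a small \emph{rigid gadget} whose boundary contains one or more corridor bottlenecks (instances of \pa\hspace{-0.222 em}) but no corner bottleneck \p and no wide bottleneck \pat\hspace{-0.222 em}, and which is rigid under both Set 1 and Set 2. A natural candidate is a thickened U-shape whose two parallel arms are separated by a unit-width corridor and are capped at both ends by solid blocks. The corridor creates an instance of \pa\hspace{-0.222 em}; the thickness of the arms and the closed ends prevent any \p or \pat from appearing; and a module-by-module check on the boundary cells establishes rigidity. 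Under Set 1, no $90^\circ$ or $180^\circ$ restrictive pivot is enabled, because the required empty cells around the pivot centre are occupied or because the centre shares no facet with the mover. Under Set 2, for every would-be leapfrog pivot the landing position either collides with the gadget or fails to be \facet-adjacent to any remaining module.

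Given such a gadget, the rest of the argument follows the template used for pattern \p\hspace{-0.222 em}. To obtain exponentially many singleton components, I would concatenate $p$ copies of the gadget into a ``thick path'' which at each joint can be extended in at least two directions (say, East--West or North--South) independently of the other choices and without introducing a \p or \pat pattern. Each of the $2^{\Omega(p)}$ resulting shapes is rigid, so each is its own connected component of ${\cal G}_n$. For components of exponential size, I would replace a constant fraction of the path by $k$ disjoint \emph{pockets}, each enclosed by rigid walls containing only pattern \pa and each holding a single free (``pink'') module that can reach at least two positions inside its pocket via pivots but cannot escape it. Pockets act independently, so their configurations multiply, giving a component of size $2^{\Omega(k)}$. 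Choosing $k := \lfloor \varepsilon n / c\rfloor$ and $p := \lfloor (1-\varepsilon)n\rfloor$ for a constant $c$ bounding the size of one gadget or pocket yields $\Omega(2^n)$ components each of size $2^{\Omega(n)}$, as required.

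The main obstacle will be the gadget design under the leapfrog set (Set 2). Under Set 1, rigidity follows almost for free from the absence of \p and \pat\hspace{-0.222 em}, since a restrictive pivot essentially needs an instance of \p to fold into. A leapfrog move, however, can rotate a module $90^\circ$ about a vertex of a non-adjacent module, as long as the mover ends up \facet-adjacent to some module; this is exactly the type of move that tries to exploit the two empty cells of an instance of \pa to ``hop'' across the corridor. Designing the arm thickness and the end-caps so that simultaneously (i) every such leapfrog hop lands in an isolated or colliding position, (ii) no \p corner is reintroduced by the caps, and (iii) the gadget can still be composed into a routable path and into pocket walls, is the delicate part. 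A careful local case analysis on the boundary cells of the gadget, combined with the unit-width corridor realising \pa\hspace{-0.222 em}, should suffice, closely mirroring the proof strategy of the preceding proposition.
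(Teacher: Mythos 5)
Your proposal follows essentially the same approach as the paper: construct a rigid gadget containing only pattern \pa, concatenate a flexible ``dark path'' whose $\Omega(2^p)$ realizations give exponentially many components, and attach pockets each holding a captive module that multiplies component size. The paper's proof differs only in that it exhibits a concrete construction in a figure (with specific constants $22$, $7$, etc.\ for the gadget sizes) rather than specifying the gadget verbally; your candidate gadget (thickened U with a unit corridor) and your correct identification that Set-2 (leapfrog) rigidity is the delicate point to verify would both need to be carried out by the same kind of local case analysis that the paper discharges by inspection of its figure.
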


\begin{proof}
	Figure~\ref{fig:pattern-2-locked} shows a configuration containing only pattern \pa that is rigid 
	under sets 1 and 2 of pivoting rules.
	The path of dark modules can have different shapes,
	as each pair of consecutive dark modules can be \facet-adjacent in at least two ways.
	Thus, a rigid construction without the part containing the pink modules
	can be configured in $\Omega(2^n)$ ways.
	If we included the gadgets containing the pink modules, 
	we obtain a locked configuration, where each of the pink modules can pivot to three different positions.
	Therefore, if the number of such gadgets is $k$, the size of the corresponding connected component of ${\cal G}_n$ is $\Omega(3^k)$. 
	Moreover, if the path of dark modules has length $p$, the number of different connected components that are obtained is $\Omega(2^p)$. 
	The right and the left ends of the configuration shown in Figure~\ref{fig:pattern-2-locked} consist of 22 modules each.  
	Additionally, there are 2 modules that lie between the dark modules and the gadgets. 
	Making $k := \lfloor\varepsilon (n-46) / 7\rfloor$ and $p := \lfloor(1-\varepsilon)(n-46)\rfloor$
	for any $0 < \varepsilon < 1$ we have that asymptotically there are $\Omega(2^{n})$ components of $2^{\Omega(n)}$ size.
\end{proof}
\begin{figure}[tbh]
	\centering
	\includegraphics[page=10,width=.7\textwidth]{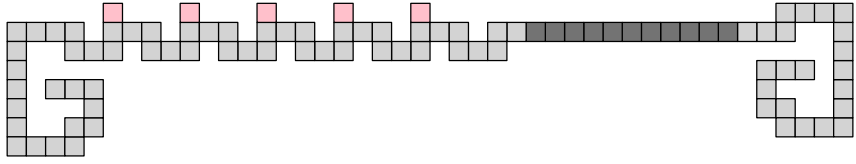}
	\caption{A configuration showing only pattern \pa (corridor bottleneck). It is locked for both the restrictive and the leapfrog pivoting moves.}
	\label{fig:pattern-2-locked}
\end{figure}

In contrast, if the entire set of monkey-pivoting moves is allowed, then no configuration can be rigid if it only contains instances of pattern \pa (and no instance of patterns \p and \pat\hspace{-0.222 em}).

\begin{proposition}
	\label{prop:iSing}
	Let ${\cal G}_n$ be the reconfiguration graph of \facet-connected pivoting squares under the entire Set 3 of monkey-pivoting moves. If only pattern \pa is allowed, and patterns \p and \pat are forbidden, then ${\cal G}_n$ contains no singleton components.
\end{proposition}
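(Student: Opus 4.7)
The plan is to prove the proposition by contradiction: assume that a facet-connected configuration $C$ containing only instances of pattern $I$ (and no instance of $\Gamma$ or $Z$) is rigid under monkey moves, and then exhibit a valid monkey move somewhere in $C$. I would begin by selecting an extremal module $m$ in $C$, specifically the module with maximum $y$-coordinate, breaking ties by maximum $x$-coordinate. By extremality, the cells immediately north, northeast, and east of $m$ are all empty. Since $|C| \geq 2$, the module $m$ has at least one edge-neighbor, and every such neighbor lies to the west or south of $m$.

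The heart of the proof is a short case analysis on $m$'s neighbors, using the absence of $\Gamma$ and $Z$ to pin down the contents of nearby cells. In the main case, $m$ has a west neighbor $s_W$; because both lie in the topmost row, the cell above $s_W$ is empty, and the forbidden-pattern hypotheses force enough additional cells to the northwest to be empty that $m$ can perform a restrictive pivot around the top-right corner of $s_W$, which is also a valid Set~3 move. When $m$ has only a south neighbor $s$, I would examine the cell east of $s$: if empty, a restrictive pivot of $m$ around $s$ works, while if it contains some module $s'$, the extra power of Set~3 (two concatenated pivots, i.e., a diagonal monkey jump around $s$ and then around $s'$) delivers $m$ into the cell east of its original position, which is empty by extremality. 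The key structural point is that only the $I$ (corridor) bottleneck is allowed, and the diagonal monkey jump is exactly the move designed to hop over such a bottleneck; in contrast, obstructions requiring a $\Gamma$ (corner) or $Z$ (wide) bottleneck to block the monkey jump are excluded by hypothesis.

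The main obstacle is ensuring that removing $m$ does not disconnect the rest of $C$, since an extremal module can still be a cut vertex of the adjacency graph. To handle this, I would refine the choice of $m$ by first restricting to a pendant block of $C$'s block-cut tree (which always exists when $|C|\geq 2$) and then choosing the topmost-rightmost module within that block; since a pendant block shares at most one cut vertex with the rest of $C$, the chosen extremal module in the block is not itself a cut vertex of $C$. A secondary but routine task is verifying arc clearance along each monkey pivot: the combinatorial smallness of the forbidden patterns $\Gamma$ and $Z$ yields a bounded local case analysis that certifies the required swept cells are empty.
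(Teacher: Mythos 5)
Your overall plan matches the paper's: find a module that is both geometrically free (so it can pivot) and not a cut vertex (so removing it keeps the rest connected), then verify a valid Set~3 move by a short local case analysis using the forbidden-pattern hypotheses. You also correctly flag the cut-vertex issue as the real obstacle.

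However, your fix for that obstacle has a genuine gap. You restrict to a pendant block $B$, pick the topmost-rightmost module of $B$, and assert it cannot be a cut vertex of $C$ because $B$ contains only one cut vertex. That inference is false: the topmost-rightmost module of $B$ can be exactly that unique cut vertex (e.g., a $2\times2$ pendant block whose NE module is the cut vertex joining $B$ to the rest of $C$ through its East facet). Moreover, even when the extremal module of $B$ is not the cut vertex, extremality \emph{within $B$} does not give the emptiness you rely on \emph{within $C$}: the cells to its north, northeast, and east may be occupied by modules belonging to other blocks. The paper avoids both problems via Lemma~\ref{lem:corner}: it defines a \emph{corner} intrinsically (a module with two consecutive empty neighboring cells, so emptiness is built into the notion), builds the cactus graph $T(G)$, and notes that a leaf cycle has several corners, so one can always be chosen distinct from that cycle's unique connecting vertex. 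You would need something equivalent; as written, the block-cut-tree step does not deliver a non-cut corner.

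Your move case analysis is also rougher than the paper's and misclassifies one move. When the cell SE of $m$ is occupied by some $s'$, a plain $90^{\circ}$ restrictive pivot of $m$ about the shared vertex already works---no diagonal monkey jump is involved; in fact the paper's proof (and its later right-hand-rule traversal) deliberately uses only \emph{straight} monkey jumps, never diagonal ones, a distinction it records because some hardware supports only straight jumps. And when the cell SE of $m$ is empty, ``a restrictive pivot of $m$ around $s$ works'' glosses over the further split on $(2,0)$ and $(2,-1)$ relative to $m$: if $(2,0)$ is occupied a $90^{\circ}$ pivot lands $m$ there, if both are empty a $180^{\circ}$ pivot lands $m$ at $(2,-1)$, and if only $(2,-1)$ is occupied a straight monkey jump is the move that unblocks $m$. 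Filling in these cases, together with a correct non-cut-corner lemma, would bring your sketch in line with the paper's argument.
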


In order to prove this result, we make use of two definitions and a lemma.

\begin{definition}
	A corner of \afacet-connected configuration of squares is a module that is adjacent to at least two empty grid positions through two consecutive edges: North and East, South and East, North and West, or South and West.
\end{definition}

\begin{definition}
	Let $G$ be the \facet-adjacency graph of a given \facet-connected configuration of squares. The cactus graph $T(G)$ of $G$ is defined as follows. For each simple cycle $C$ in $G$, consider the set $Region(C)$ of grid positions that lie in $C$ or are enclosed by $C$. A simple cycle $C$ is said to be maximal if $Region(C)$ is maximal with respect to inclusion. We define $T(G)$ as the connected subgraph of $G$ that contains all the leaves of $G$, all the maximal cycles of $G$, and all the connections among them.
\end{definition}

\begin{figure}[tbh]
	\centering
	\includegraphics[page=11,width=.8\textwidth]{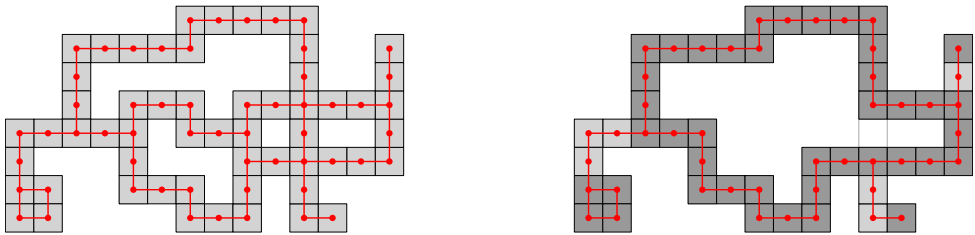}
	\caption{The adjacency graph $G$ of \afacet-connected configuration (left) and the corresponding cactus graph $T(G)$ (right); leaves and maximal cycles are dark-shaded.}
	\label{fig:cactus-graph}
\end{figure}

Figure~\ref{fig:cactus-graph} illustrates this definition.
Although without a specific name or designation, this same graph has been previously used in~\cite{pushing-squares} to help prove the connectedness of the reconfiguration graph of modular robots that follow the sliding cube model.

\begin{lemma}\label{lem:corner}
	Let $C$ be any configuration of \facet-connected squares. Let $G$ be the \facet-adjacency graph of $C$. There always exists a corner in $C$ that is not a cut vertex of $G$.
\end{lemma}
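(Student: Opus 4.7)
The plan is to use the block–cut decomposition of $G$ (rather than the cactus graph $T(G)$), as it makes the extremity argument cleanest. We split into two cases. If $G$ has no cut vertices (including the degenerate case $|V(G)|=1$), the topmost-leftmost module $v$ of $C$ already works: by extremity, both the cell directly north and the cell directly west of $v$ contain no module, so $v$ is a NW-corner, and it is trivially not a cut vertex. We may therefore assume that $G$ has at least one cut vertex, so that its block–cut tree has at least two nodes and in particular a leaf; this leaf must be a block $B$ (cut-vertex nodes have degree $\geq 2$) containing a unique cut vertex $c$ of $G$, and every other module of $B$ is a non-cut-vertex of $G$.

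The key observation is the following closure property: for every module $v\in B\setminus\{c\}$, every facet-neighbor of $v$ in $C$ already lies in $B$. Indeed, $v$ is a non-cut-vertex of $G$, so it belongs to a unique block, namely $B$; every edge of $G$ incident to $v$ must lie in that unique block, which forces the other endpoint into $B$. Equivalently, any grid-cell adjacent to $v$ that does not host a module of $B$ is actually empty in $C$. Now, since $B$ is either a bridge or a 2-connected subgraph, we have $|B|\geq 2$, and the topmost-leftmost module $v_1$ of $B$ differs from the bottommost-rightmost module $v_4$ of $B$, so at least one of them is not $c$. If $v_1\neq c$, the extremity of $v_1$ inside $B$ says its north and west grid-neighbors are not in $B$, and the closure property promotes this to saying they are empty cells of $C$; so $v_1$ is a NW-corner of $C$, and it is not a cut vertex because $v_1\in B\setminus\{c\}$. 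If instead $v_4\neq c$, the symmetric argument produces an SE-corner, and we are done.

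The step I expect to be the main obstacle is precisely this localization to a leaf block. A priori, an extreme module of $C$ can easily be a cut vertex; consider, for example, a staircase-shaped configuration whose topmost-leftmost module sits on a narrow bottleneck. The argument must therefore isolate a substructure in which extremity simultaneously yields the corner condition (a geometric statement about $C$) and the non-cut-vertex condition (a graph-theoretic statement about $G$). The leaf-block closure property above is exactly what bridges the two: it guarantees that an extreme non-$c$ vertex of $B$ really is an extreme corner \emph{of $C$} in the relevant directions, so we cannot be foiled by modules lying outside $B$ that happen to occupy the supposedly empty neighbors. Once that closure is established, the counting of extreme positions in $B$ finishes the proof immediately.
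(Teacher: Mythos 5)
Your proof is correct, and it takes a genuinely different route from the paper's. The paper works with a geometry-flavored structure it calls the cactus graph $T(G)$: the subgraph of $G$ spanned by the leaves of $G$, the ``maximal'' simple cycles (maximal with respect to the grid region they enclose), and the connections among them. It then picks either a degree-one node of $T(G)$ or a leaf cycle of $T(G)$, and argues that a suitably extreme vertex of that leaf structure, distinct from the unique attachment point, is a corner of $C$ and not a cut vertex. The paper's proof is quite terse at exactly the step you flag as the main danger: why a vertex that is extreme \emph{within the leaf cycle} is a corner \emph{of the full configuration $C$}, i.e., why no module outside the cycle occupies the supposedly empty neighboring cells. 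Your block--cut version resolves this cleanly with the closure property: a non-cut vertex lies in a unique block, so all its incident edges, and hence all its facet-neighbors in $C$, live inside that block; therefore ``not in $B$'' really does mean ``empty in $C$,'' and extremity inside $B$ transfers to extremity in $C$. What each approach buys: the paper's cactus graph is tailored to the grid and is reused elsewhere (it is invoked in later arguments about the outer shell and is credited to~\cite{pushing-squares}), so introducing it once pays dividends; your approach buys standardness and rigor at this particular lemma, since the block--cut tree and the fact that a leaf block has a unique cut vertex are textbook, and the closure step makes the geometry/graph-theory handoff explicit rather than implicit. Two small points worth keeping as you wrote them: you correctly handle $|V(G)|=1$ and the cut-vertex-free case up front, and your observation that the topmost-leftmost and bottommost-rightmost modules of $B$ must differ whenever $|B|\geq 2$ (so at least one avoids $c$) is the right way to guarantee a non-$c$ extreme vertex exists.
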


\begin{proof}
	Let $T(G)$ be the cactus graph of $G$.
	If $T(G)$ has a node $m$ of degree one, then $m$ corresponds to a corner in $C$ that is not a cut vertex of $G$, so the lemma holds. Otherwise, we view $T(G)$ as a tree of cycles, and arbitrarily pick a leaf cycle in $T(G)$ (note that at least one such leaf cycle exists). The leaf cycle must have a corner different from the (unique) node that connects it to the rest of $T(G)$. Such a node cannot be a cut vertex. 
\end{proof}

We can now proceed to prove Proposition~\ref{prop:iSing}.

\begin{proof}[Proof of Proposition~\ref{prop:iSing}]
	Let $C$ be a configuration of \facet-connected pivoting squares that does not contain patterns \p and \pat\hspace{-0.222 em}. By Lemma~\ref{lem:corner} there exists a module $c$ in $C$ such that \emph{i)} $c$ is a corner and \emph{ii)} removal of $c$ does not disconnect $C$. We will prove that $c$ can pivot. Assume, without loss of generality, that $c$ is a North-East corner, and that it is connected to the rest of $C$ through its South; refer to Figure~\ref{fig:pattern-2-non-rigid}.
	\begin{figure}[h]
		\centering
		\includegraphics[page=12,width=.95\textwidth]{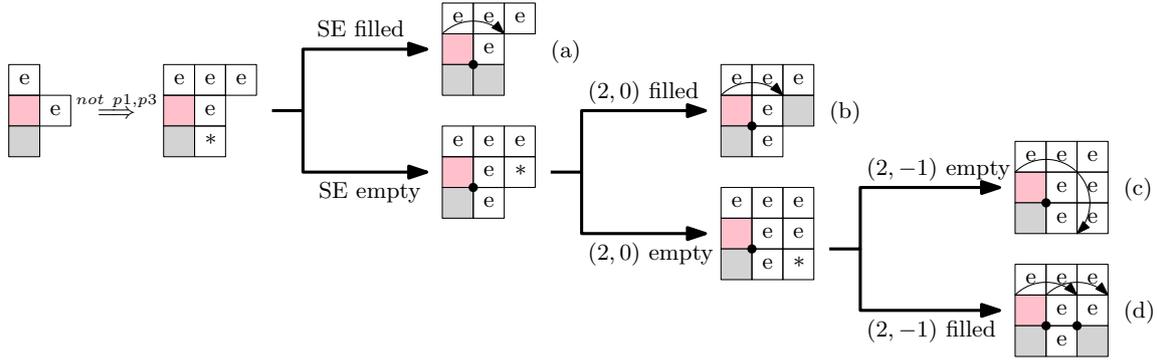}
		\caption{If only pattern \pa is allowed some corner can pivot using monkey-pivoting moves.}
		\label{fig:pattern-2-non-rigid}
	\end{figure}
	Since $c$ is a North-East corner, the grid positions to its East and North are empty. Since patterns \p and \pat do not appear in $C$, the grid positions $(1,1)$ and $(2,1)$ relative to $c$ are also empty. There are two possibilities for the grid position South-East from $c$. If it is occupied by a module, then $c$ can pivot $90^\circ$; see Figure~\ref{fig:pattern-2-non-rigid} (a).
	If it is empty, then again there are two possibilities for the grid position $(2,0)$ relative to $c$. If it is occupied, then $c$ can pivot $90^\circ$; see Figure~\ref{fig:pattern-2-non-rigid} (b).
	If it is empty, consider the grid position $(2,-1)$ with respect to $c$. If this is empty too, then $c$ can pivot $180^\circ$; see Figure~\ref{fig:pattern-2-non-rigid} (c).
	If it is occupied, then $c$ can make a
	straight monkey jump; see Figure~\ref{fig:pattern-2-non-rigid} (d).
\end{proof}

\subsection{Pattern \pat\hspace{-0.222 em}: wide bottleneck}
The forbidden pattern \pat is weaker than the forbidden patterns \p and \pa in the sense that no configuration can be rigid if it contains only instances of pattern \pat
\hspace{-0.222 em}. 

\begin{proposition}
	Let ${\cal G}_n$ be the reconfiguration graph of \facet-adjacent pivoting squares. If only pattern \pat is allowed, and patterns \p and \pa are forbidden, then ${\cal G}_n$ contains no singleton components, regardless of the set of pivoting moves allowed.
\end{proposition}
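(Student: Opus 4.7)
The plan is to follow the template established in the proof of Proposition~\ref{prop:iSing}. Since every Set~1 (restrictive) move is also a valid Set~2 (leapfrog) and Set~3 (monkey) move, it suffices to exhibit a Set~1 pivot for an arbitrary configuration $C$ satisfying the hypothesis; that single move then witnesses that $C$ is not a singleton of ${\cal G}_n$ regardless of which set of moves is under consideration. I would therefore begin by applying Lemma~\ref{lem:corner} to obtain a corner module $c$ of $C$ whose removal leaves $C$ \facet-connected. After a rotation and reflection, I may assume $c$ is a North-East corner whose unique \facet-neighbor is the module $s$ immediately South of $c$, and I label the remaining cells by their offset from $c$.

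The main step is to leverage the absence of patterns \p and \pa to obtain enough clearance around $c$ for a Set~1 rotation. Because $c$ is a corner, the cells $(0,1)$ and $(1,0)$ are empty; forbidding \p (the corner bottleneck) forces the diagonal cells that would otherwise create a vertex-only contact with $c$ to be empty as well; and forbidding \pa (the corridor bottleneck) forces cells at distance two to be empty in the directions where a one-wide passage would otherwise appear. Taken together, these constraints give a $2\times 2$ block of empty cells adjacent to $c$, which is exactly the clearance a $90^\circ$ restrictive pivot of $c$ about an endpoint of the shared edge $cs$ requires.

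I would then perform a short case analysis on whether $s$ has an East neighbor, a West neighbor, or neither. When $s$ has a neighbor on one of the sides, $c$ performs a $90^\circ$ pivot about the corresponding endpoint of $cs$ into the cleared region; when $s$ has neither, $c$ sits at the tip of a strip-like end and admits a $180^\circ$ restrictive pivot about an endpoint of $cs$, whose swept region is clear by the symmetric argument on the other side. In every case the resulting configuration remains \facet-connected because $c$ is not a cut vertex, so the move is valid. The main obstacle I expect is the bookkeeping needed to rule out, case by case, every occupied cell in the swept region of the chosen pivot: any such cell must be shown, together with the cells already fixed, to instantiate one of the forbidden patterns \p or \pa, contradicting the hypothesis. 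This is a finite diagram check in the spirit of Figure~\ref{fig:pattern-2-non-rigid}, and the payoff is that, unlike the proof of Proposition~\ref{prop:iSing}, no appeal to monkey jumps is ever needed.
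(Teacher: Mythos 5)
Your proposal follows essentially the same route as the paper: invoke Lemma~\ref{lem:corner} to obtain a non-cut-vertex corner $c$, normalize to a North-East corner, use the absence of pattern \p to clear $c$'s diagonal neighbor and the absence of pattern \pa to clear the width-two band, then exhibit a Set~1 pivot. The paper's organization is slightly simpler and sidesteps one flaw in yours. The paper only ever pivots $c$ clockwise and cases on whether the cell $(1,-1)$ (South-East of $c$) is occupied: if so, a $90^\circ$ pivot into $(1,0)$ works; if not, the forbidden pattern \pa forces $(2,0)$ and $(2,-1)$ to be empty as well and a $180^\circ$ clockwise pivot works. You instead case on whether $s$ has an East, a West, or no neighbor, and in the ``West'' case pivot $c$ counterclockwise into $(-1,0)$. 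That move silently relies on the cell West of $c$ being empty, which is why you asserted $c$ has a unique facet-neighbor — but Lemma~\ref{lem:corner} does not deliver that. A corner may have degree two; a NE corner with both a South and a West neighbor satisfies the lemma's conclusion, and for such a $c$ your WLOG normalization does not apply. The fix is simply to drop the ``West'' branch: whenever $s$ has no East neighbor, perform the clockwise $180^\circ$ pivot exactly as the paper does. The swept cells $(1,0)$, $(1,1)$, $(1,-1)$, $(2,0)$, $(2,-1)$ are all shown empty by the corner property and the forbidden patterns \p and \pa, and this pivot never involves $c$'s West side. With that repair your proof coincides with the paper's, and your final observation — that only restrictive (Set~1) moves are needed, with no monkey jumps as in Proposition~\ref{prop:iSing} — is correct and is exactly what makes this proposition hold ``regardless of the set of pivoting moves allowed.''
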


\begin{proof}
	Let $C$ be a configuration of \facet-connected pivoting squares that does not contain patterns \pa and \pat\hspace{-0.222 em}. By Lemma~\ref{lem:corner} there exists a module $c$ in $C$ such that \emph{i)} $c$ is a corner and \emph{ii)} removal of $c$ does not disconnect $C$. We will prove that $c$ can pivot. Assume, without loss of generality, that $c$ is a North-East corner, and that it is connected to the rest of $C$ through its South. Since $c$ is a North-East corner, the grid positions to its East and North are empty. Since pattern \p does not appear in $C$, the grid position to the North-East of $c$ is also empty. Finally, there are two possibilities for the grid position South-East from $c$. If it is occupied by a module, then $c$ can pivot $90^\circ$, as illustrated in Figure~\ref{fig:pattern-3-non-rigid} (top). If it is empty, then the positions $(2,0)$ and $(2,-1)$ relative to $c$ need to be empty as well, since pattern \pa is forbidden; see Figure~\ref{fig:pattern-3-non-rigid} (bottom). Therefore, $c$ can pivot $180^\circ$.
	\begin{figure}[tbh]
		\centering
		\includegraphics[page=13,width=.7\textwidth]{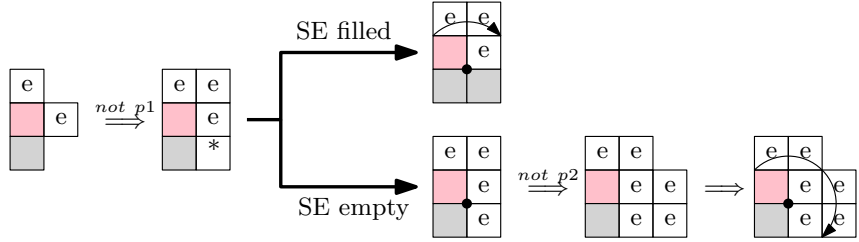}
		\caption{If only pattern \pat is allowed, any corner can pivot regardless of the set of pivoting moves used.}
		\label{fig:pattern-3-non-rigid}
	\end{figure}
\end{proof}

However, there can be locked configurations containing only instances of pattern \pat\hspace{-0.222 em}. 
Figure~\ref{fig:pattern-3-locked} (top) shows two configurations that are locked for Set 1 of pivoting moves (restrictive) and do not have any instance of patterns \p or \pa\hspace{-0.222 em}, but only instances of pattern \pat (wide bottleneck). 

\begin{figure}[tbh]
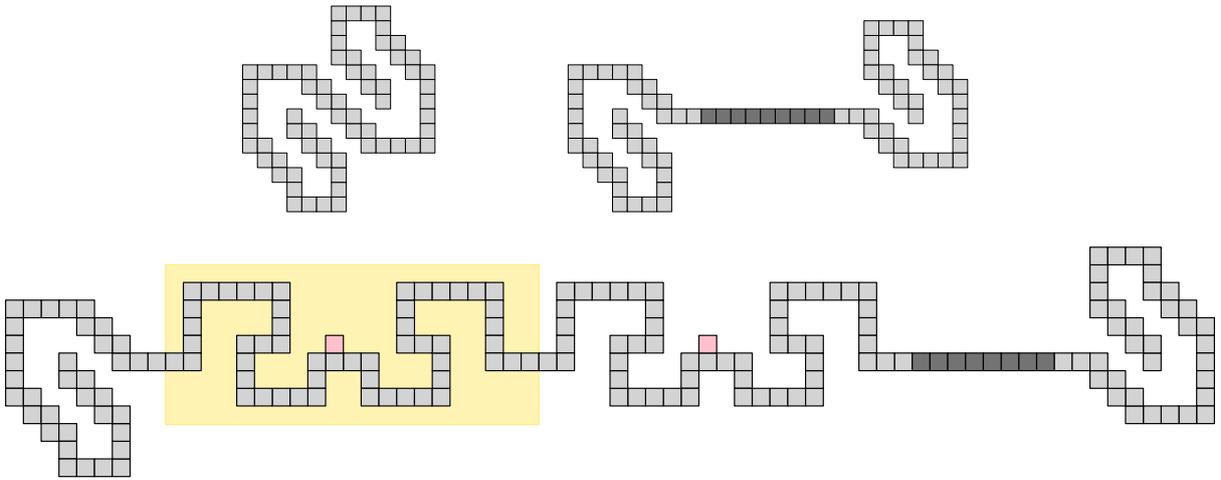

	\centering
	\includegraphics[page=14,width=0.6\textwidth]{Arxiv_musket_figures.pdf}
	\bigskip
	
	\includegraphics[page=15,width=\textwidth]{Arxiv_musket_figures.pdf}
	
	\caption{Three configurations with only instances of pattern \pat\hspace{-0.222 em}. They cannot be reconfigured into a strip by means of the pivoting moves in Set 1 (restrictive).}
	\label{fig:pattern-3-locked}
\end{figure}

\begin{proposition}
	\label{prop:zExp}
	Let ${\cal G}_n$ be the reconfiguration graph of \facet-connected pivoting squares under pivoting set of moves 1.
	If only pattern \pat is allowed, and patterns \p and \pa are forbidden,
	the number of connected components of ${\cal G}_n$ of exponential size is
	exponential.
\end{proposition}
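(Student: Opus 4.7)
The plan is to mirror the constructions used in the proofs of the previous two propositions (for patterns $\Gamma$ and $I$), but now built entirely out of the locked gadgets for pattern $Z$ shown in Figure~\ref{fig:pattern-3-locked}. I will produce a family of configurations that (i) contain only instances of pattern $Z$, (ii) are locked under Set~1 (restrictive) pivoting moves, and that admit two independent sources of exponential multiplicity: a long ``backbone'' of dark modules whose different shapes yield $\Omega(2^p)$ distinct connected components, and $k$ local gadgets with pink movable modules that yield $\Omega(c^k)$ configurations within each component.

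First I would take as a starting point one of the locked configurations from Figure~\ref{fig:pattern-3-locked}, which is locked under restrictive moves and contains only instances of $Z$. I would then attach to it a long path of dark modules that serves as the backbone, designed so that consecutive dark modules can be placed \facet-adjacent in at least two different ways (e.g.\ ``East--West'' versus ``North--South'' steps), exactly as in the construction of Figure~\ref{fig:pattern-2-locked}. Each of these local choices must be made carefully so that no $\Gamma$ or $I$ pattern is ever produced and so that the backbone itself remains locked under Set~1: in other words, the backbone modules inherit rigidity from the surrounding $Z$-gadget structure. This gives $\Omega(2^p)$ choices for a backbone of length $p$, and since each such choice is itself locked, distinct choices belong to distinct connected components of ${\cal G}_n$.

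Next I would introduce $k$ copies of a local gadget, analogous to the pink-module gadgets in Figures~\ref{fig:pattern-1-locked} and~\ref{fig:pattern-2-locked}, but built from $Z$-only material. Each such gadget contains a single ``free'' pink module that can pivot among a constant number $c \geq 2$ of grid positions inside a hole while the rest of the configuration remains frozen and while every intermediate and final configuration still contains only $Z$-instances. Consequently each connected component of ${\cal G}_n$ that contains one of our configurations has size at least $\Omega(c^k)$. Choosing $k := \lfloor \varepsilon n / s \rfloor$ (where $s$ is the constant module-cost of one gadget plus one backbone segment) and $p := \lfloor (1-\varepsilon) n \rfloor$ yields $\Omega(2^{n})$ connected components, each of size $2^{\Omega(n)}$, establishing the proposition.

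The main obstacle, and what I would spend most of the write-up checking, is the local geometry: I must verify that every backbone variation and every pivot of a pink module inside a gadget preserves \facet-connectivity, never creates an instance of $\Gamma$ or $I$, and keeps every non-pink module rigid under Set~1 moves. The rigidity part follows from the same corner/edge analysis that proves Figure~\ref{fig:pattern-3-locked} is locked: every corner of the outer shape is blocked on at least one of the two neighboring cells required by a restrictive $90^\circ$ or $180^\circ$ pivot, and the $Z$-bottlenecks prevent any interior module from moving. Once the gadget and the backbone segments are fixed as constant-size pieces with this property, the counting argument is identical to that of Propositions on patterns $\Gamma$ and $I$, and the proposition follows.
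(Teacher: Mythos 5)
Your proposal follows essentially the same approach as the paper: start from the locked $Z$-only configuration of Figure~\ref{fig:pattern-3-locked} (bottom), use a variable-shape dark backbone to produce $\Omega(2^p)$ distinct connected components, and attach $k$ constant-size gadgets with a movable pink module to make each component have size $\Omega(c^k)$, then pick $k$ and $p$ linear in $n$. The paper differs only in pinning down the explicit constants (each gadget has 53 modules with 3 free positions for its pink module, the two ends cost 36 modules each), which you leave generic but which does not change the argument.
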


\begin{proof}
	Figure~\ref{fig:pattern-3-locked} (bottom) shows a configuration containing only pattern \pat that is locked
	under Set 1 of pivoting rules.
	Each of the pink modules can pivot to three different positions inside the gadget (highlighted in Figure~\ref{fig:pattern-3-locked}) containing it, which consists of 53 modules.
	Therefore, if the number of such gadgets is $k$, the size of the corresponding connected component of ${\cal G}_n$ is $\Omega(4^k)$.
	Moreover, if the path of dark modules has length $p$, the number of different connected components that are obtained is $\Omega(2^p)$,
	as each pair of consecutive dark modules can be \facet-adjacent in at least two ways. 
	The right and the left ends of the configuration shown in Figure~\ref{fig:pattern-2-locked} consist of 36 modules each. 
	Making $k := \lfloor\varepsilon (n-72) / 53\rfloor$ and $p := \lfloor(1-\varepsilon)(n-72)\rfloor$
	for any $0 < \varepsilon < 1$ we have that asymptotically there are $\Omega(2^{n})$ components of $2^{\Omega(n)}$ size.
\end{proof}

\section{Universal reconfiguration algorithm with $O(1)$ musketeers}
\label{sec:algorithm}

In this section, we aim for the important practical goal of universal
reconfiguration, that is, connectivity of the reconfiguration graph.
We have seen that the local separation condition (while sufficient) is too strong: 
Robot configurations can contain many instances of the forbidden patterns
and still be reconfigurable.
On the other hand, we proved that as soon as the local separation
condition is relaxed, the reconfiguration graph breaks into at least an
exponential number of connected components of exponential size.

In what follows, we propose and analyze a new approach for reconfiguring arbitrary \facet-connected configurations (which may contain an arbitrary number of instances of the forbidden patterns).
Our strategy is based on the addition of $O(1)$ \emph{musketeer modules}, i.e.,  modules that can freely move around the boundary
of our robot configuration and will be used as helpers in certain situations.
These modules are not necessarily part of the specified initial or target configuration.

\subsection{Preliminaries: outer shell\label{sec:outer-shell}}

Let $C$ be an arbitrary \facet-connected configuration of pivoting squares.
We start by introducing a few definitions.

Let $G$ be the \facet-adjacency graph of $C$, and $\overline{G}$ the \facet-adjacency graph of the lattice cells that are not occupied by a module of $C$.
Each bounded connected component of $\overline{G}$ is a \emph{hole} of the robot configuration $C$.
The only unbounded connected component of $\overline{G}$ is the \emph{exterior} of $C$.
The \emph{boundary} of $C$ is the set of lattice cells that are empty and are \facet-adjacent to (at least) one module of $C$.
If the configuration has holes, we define its \emph{external boundary}
as the subset of the boundary contained in the unbounded connected component of $\overline{G}$.

\begin{lemma}
	\label{lem:righthand}
	Let $C$ be an arbitrary and static \facet-connected configuration of pivoting squares.
	Let $m$ be an active module attached to $C$, North of the topmost rightmost module of $C$.
	Using the monkey set of moves (Set 3), $m$ can pivot along the external boundary of $C$ following the right-hand rule and return to its initial position.
	If only the leapfrog set of moves (Set 2) is allowed, this is not always possible.
\end{lemma}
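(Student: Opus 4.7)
The plan is to prove the two directions separately. For the positive statement, I would first formalize the right-hand rule: given the cell currently occupied by $m$ and the direction from which $m$ arrived, the rule prescribes a unique preferred next boundary cell along the clockwise traversal of the external boundary of $C$ (try right, then straight, then left, then reverse, where right/straight/left are relative to the direction of motion). The initial move is well-defined: since $m$ sits north of the topmost-rightmost module $r$ of $C$, the cells east and northeast of $r$ are empty, so $m$ can pivot $90^\circ$ clockwise about the northeast vertex of $r$ via a restrictive move.

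The induction step is the heart of the argument. At each step $m$ is facet-adjacent to at least one module $s\in C$, and the right-hand rule selects the next boundary cell $q$. I would carry out a finite case analysis on the contents of the lattice cells near $s$ and $m$ that determine whether a monkey pivot from $m$ to $q$ exists. In every case, exactly one of the monkey options (a restrictive $90^\circ$ or $180^\circ$ pivot, a leapfrog $90^\circ$ pivot, a straight monkey jump, or a diagonal monkey jump) realizes the step, because Set 3 was designed to match the geometric situations that can arise along a boundary: a right-neighbor along the wall, an outward corner, a straight two-cell corridor crossing, and a reflex corner. Collision-freeness and landing adjacency follow by inspection of the local pattern, and facet-connectivity of $C$ is never at risk since $C$ is static. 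Finally, the right-hand-rule traversal of the external boundary of a connected planar region traces a closed Jordan curve visiting each outer boundary edge exactly once, so the walk returns $m$ to its initial cell after finitely many pivots.

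For the negative direction I would exhibit one configuration $C$ whose external boundary contains a bottleneck forcing a move outside Set 2. The simplest witness is $C$ containing an instance of pattern~\pa along its external boundary, arranged so that right-hand progression demands a straight monkey jump across the two-cell corridor; under leapfrog, $m$ must land facet-adjacent after a single $90^\circ$ rotation about a vertex incident to a single module, and the geometry of the corridor forces the only candidate landing cell to be non-adjacent to any module of $C$, so $m$ stalls. The configuration of Figure~\ref{fig:moves-different-2-3} already shows that monkey moves are strictly needed to cross certain local obstructions, and a minor adjustment of that figure places the obstruction directly on the external boundary so that the right-hand walk is forced through it. I expect the main obstacle to be the exhaustive local case analysis for the monkey direction: each case is elementary, but the enumeration must cover every local pattern consistent with facet-connectivity of $C$ and with the right-hand prescription, without duplication; once that is done, closure of the walk follows from the boundary being a single Jordan curve.
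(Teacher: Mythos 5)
Your plan matches the paper's proof: a clockwise wall-following argument in which a local case analysis on the occupancy of cells around $m$ shows that some Set-3 pivot always realizes the next right-hand step, termination follows from the boundary being finite, and the negative direction is witnessed by a corridor-like local pattern where the forced step is a straight monkey jump that leapfrog cannot perform. The only real differences are presentational: the paper packages the right-hand rule as an explicit local invariant on the cells adjacent to $m$ (avoiding any need to track an arrival direction) and notes that only straight, never diagonal, monkey jumps arise; and your initial step is a $180^\circ$ restrictive pivot (to the cell East of $r$), not a $90^\circ$ one, since the cell East of $r$ is empty — a detail that does not affect the argument.
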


\begin{proof}
	We start by proving that such a sequence of moves from Set 3 is always possible. In order to do that, we prove that an invariant holds before and after each possible move, which allows $m$ to pivot clockwise following the right-hand rule.
	
	Let the cell positions be denoted by their relative coordinates with respect to $m$. The invariant is that position $(0,1)$ is empty and that if at least one of $a_0=(-1,0)$ or $a_1=(-1,1)$ is occupied, then both $b_0=(1,0)$ and $b_1=(1,1)$ must be empty; see Figure~\ref{fig:RightHandInvariant}. Naturally, the invariant will be rotated as $m$ moves along the boundary of $C$, so that it is always facing outward. The invariant is trivially satisfied initially by our choice of starting position.
	\begin{figure}[tbh]
		\centering
		\includegraphics[page=16,width=.1\textwidth]{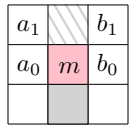}
		\caption{Illustration of the invariant. The active module $m$ (in pink) is \facet-adjacent to a static module of $C$ (in gray). The opposite cell (striped) is empty. If $a_1$ or $a_2$ are occupied, then both $b_1$ and $b_2$ are empty.}
		\label{fig:RightHandInvariant}
	\end{figure}
	
	We want to show that, when the invariant is satisfied, $m$ can pivot according to the right-hand rule and that the position it moves to satisfies the invariant. First consider the case where $b_0$ is occupied and thus $a_0$ and $a_1$ are empty. If $b_1$ or $c_2=(0,2)$ are occupied, $m$ can pivot to $(0,1)$ and the invariant holds in a rotated version; see Figure~\ref{fig:RightHandCase1}.
	\begin{figure}[tbh]
		\centering
		\includegraphics[page=17,width=.75\textwidth]{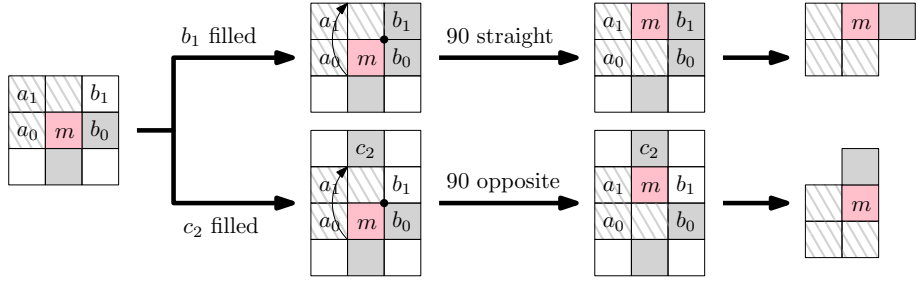}
		\caption{When $b_0$ is occupied, $a_0$ and $a_1$ are empty. The move, when $b_1$ or $c_2$ are occupied.
			Striped cells are empty, filled cells are occupied.}
		\label{fig:RightHandCase1}
	\end{figure}
	
	On the other hand, when $b_1$ and $c_2$ are empty, consider $b_2=(1,2)$; refer to Figure~\ref{fig:RightHandCase2}. If $b_2$ is empty, $m$ can move to $b_1$ and maintain the invariant. If $b_2$ is occupied, consider $a_2=(-1,2)$. If this cell is empty, $m$ can move there and after pivoting the invariant is satisfied (in a rotated version). Finally, if $b_2$ and $a_2$ are both occupied, $m$ can move to $a_1$ and a rotated version of the invariant holds.
	
	\begin{figure}[tbh]
		\centering
		\includegraphics[page=18,width=.95\textwidth]{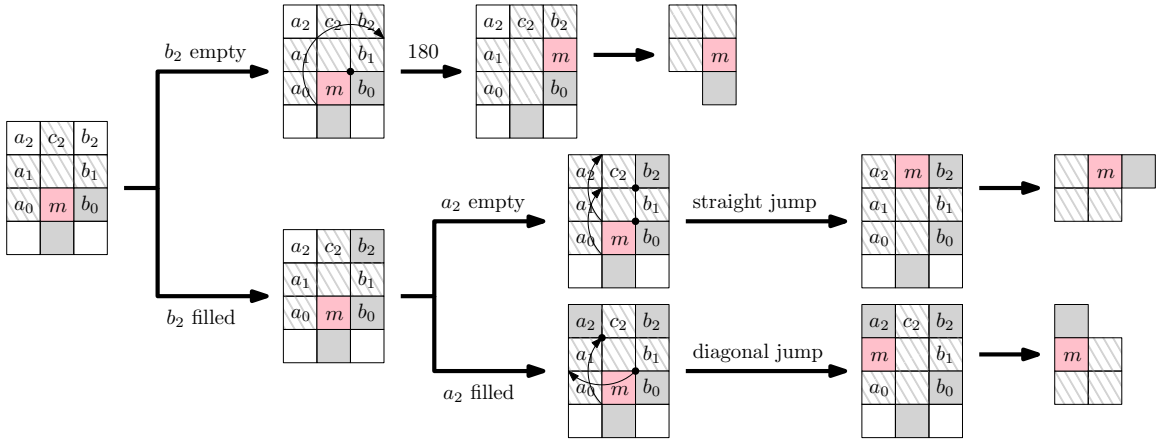}
		\caption{When $b_0$ is occupied, $a_0$ and $a_1$ are empty. The move, when $b_1$ or $c_2$ are both empty.
			Striped cells are empty, filled cells are occupied.}
		\label{fig:RightHandCase2}
	\end{figure}
	
	Now, consider the case where $b_0$ is empty, illustrated in Figure~\ref{fig:RightHandCase3}. If $b_1$ is occupied, the invariant guarantees that $a_1$ and $a_2$ are empty, and $m$ can move to $b_1$. After the move, a rotated version of the invariant holds.
	\begin{figure}[tbh]
		\centering
		\includegraphics[page=19,width=.7\textwidth]{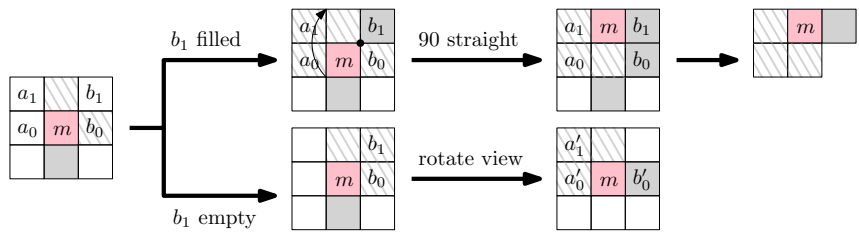}
		\caption{The move, when $b_0$ is empty.
			Striped cells are empty, filled cells are occupied.}
		\label{fig:RightHandCase3}
	\end{figure}
	Finally, notice that the case where both $b_0$ and $b_1$ are empty is equivalent to the case where $b_0$ is occupied. This is made visibly clear after rotating the entire configuration $90^\circ$.
	
	Since the static configuration $C$ has a finite number of modules and is \facet-connected, this concludes the proof that, using the monkey set of moves (Set 3) $m$ can pivot along the external boundary of $C$ following the right-hand rule and return to its initial position. In addition, since the starting position of $m$ belongs to the external boundary of $C$, and no move can get $m$ into a hole, the positions occupied by $m$ along its traversal are all located in the external boundary.
	
	If the robot is only allowed the operations of the leapfrog model, then it is not always possible for the active module $m$ to pivot along the boundary and return to its initial position. Indeed, any configuration containing one of the two situations depicted in Figure~\ref{fig:RightHandCase2}, where the moving module performs a monkey jump, cannot be traversed in the leapfrog model. In the leapfrog model, the active module $m$ would get stuck and could not advance following the right-hand rule.
\end{proof}

It is worth noticing that the proof of Lemma~\ref{lem:righthand} does not require the use of diagonal monkey jumps, but only of straight monkey jumps. 
This is relevant form a practical viewpoint, 
since it allows our results to be applied to a larger class of modular robots. 
For example, the hardware systems modeled in~\cite{heuristics-square,M-blocks} can perform straight monkey jumps, but not diagonal ones. 

We can now define the \emph{outer shell} of \afacet-connected configuration $C$ of pivoting squares to be the subset of the external boundary of $C$ formed by the lattice cells eventually occupied by any active robot module $m$ initially positioned North of the topmost-rightmost module in $C$, in its right-hand rule traversal of the boundary of $C$, described in Lemma~\ref{lem:righthand}. Figure~\ref{fig:outside} illustrates this concept.
\begin{figure}[tbh]
	\centering
	\includegraphics[page=20,width=.8\textwidth]{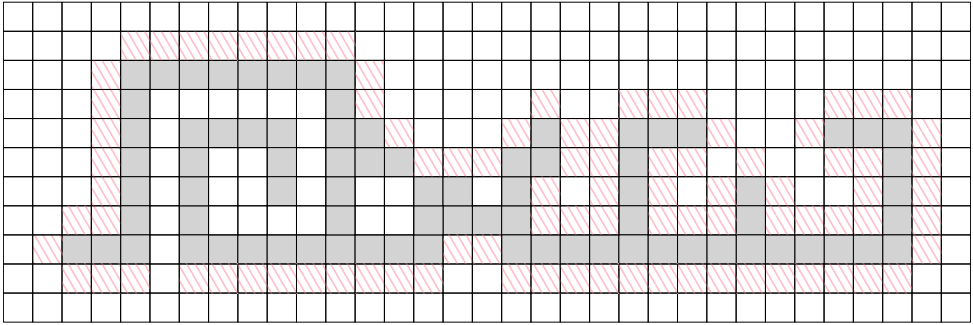}
	\caption{A robot configuration (in gray) and its associated outer shell (striped in pink).}
	\label{fig:outside}
\end{figure}

\subsection{Algorithm Overview}
Our reconfiguration algorithm transforms any initial \facet-connected configuration $C$ of pivoting squares into any goal configuration with the same number of modules.

In order to simplify the algorithm's description, we use an intermediate canonical configuration, say a strip, and describe the transformation from the initial shape to the strip. The reconfiguration from the strip to the final shape is obtained by reversing the steps of the algorithm. The strip can be built from any lexicographically best positioned module of the configuration. For example, we will grow a horizontal strip to the left of the bottommost of the leftmost modules of the configuration.

The strategy behind the algorithm is simple.
It consists of sequentially choosing a module from the configuration that is not a cut vertex of its \facet-adjacency graph, and make it pivot, following the right-hand rule, along the outer shell, until it reaches the tip of the strip and stops.
The problem of this strategy, as we saw in Section~\ref{sec:reconfigraph}, is that the reconfiguration graph is not connected, even under the extended set of monkey moves.
In order to overcome this problem, the algorithm uses \emph{\aux\ modules}. 
Any module from the canonical strip can serve as a \aux\ module.
We will prove that five \aux\ modules are sufficient and sometimes necessary to solve any reconfiguration based on our strategy.
Because the canonical strip is initially empty, it may be necessary to add \aux\ modules to the strip, if fewer than needed are available (this may happen at most once).

\subsection{Algorithm Details}

The description of the algorithm and the proof of its correctness make use of a {\em potential} function. If $m$ is a module located in the lattice position with coordinates $(x,y)$, the potential function at $m$ is defined as $\Phi(m)=(x+y,x)$.
The potential being a two-dimensional function, we sort its values lexicographically.
The maximum potential $\Phi_{max}$ (minimum potential $\Phi_{min}$) of a configuration is the lexicographically largest (smallest) potential of all its modules.
Note that, whenever we use the term \emph{configuration}, we refer to the \facet-connected component that includes all modules other than the ones in the canonical strip.

Given any configuration, we define North-East and South-West as being the modules with highest and lowest potential, respectively. 
Notice that in any configuration $C$, both North-East and South-West are \facet-adjacent to the outer shell of $C$.

\subsubsection{\Aux\ modules}

\begin{definition}
	We say that a module $m$ is {\em outer-free} in a configuration $C$ if it is \facet-adjacent to the outer shell and it can pivot clockwise, without disconnecting the robot.
\end{definition}

The first step of the algorithm is to look for an outer-free module, and pivot it to the tip of the strip. This step is repeated until no further outer-free modules exist in the initial configuration. If at that point the configuration is a strip, the algorithm ends.

Otherwise, all the modules in the strip may be used as \aux\ modules, one at at time, starting from the tip of the strip, pivoting them to the positions where they are needed, as described in next Section~\ref{sec:bridging}. 
Since our algorithm may require five such modules, it may be necessary to add extra modules to the strip (or anywhere in the configuration where they are outer-free) 
in order to complete the necessary set of \aux\ modules. This can be done at this stage or on the fly, as needed. This second option may be preferable in some cases, as not all configurations require as many as five \aux\ modules.

\subsubsection{Bridging procedure}\label{sec:bridging}
In this section we describe an operation necessary in some situations when there are no outer-free modules in the configuration.
Let $m$ be the North-East module, i.e., the maximum potential module of a given configuration $C$.
Trivially, there can be no modules of $C$ located North, North-East, or East of $m$, i.e., in positions $(0,1)$, $(1,1)$, or $(1,0)$ relative to $m$.
Therefore, the degree of $m$ in the \facet-adjacency graph can only be 1 or 2.
Since $m$ is not outer-free, it must be a cut vertex and have degree 2.
Let $b_1$ and $g_1$ respectively be its counterclockwise and clockwise \facet-adjacent modules; see Figure~\ref{fig:3-square}. 
We color the two connected components of $C$ connected by $m$ blue and green, so that $b_1$ is blue and $g_1$ is green. 
One important procedure of our algorithm, which we call \emph{bridging}, is the act of using \aux\ modules to connect the green and blue components so that $m$ becomes outer-free. 

\begin{figure}[tbh]
	\centering
	\includegraphics[page=21,width=\textwidth]{Arxiv_musket_figures.pdf}
	\caption{Top: $3\times 3$ square $S$ in its initial position $s_0$. The outer thick line indicates the path traversed by the center of $S$.
		Dots correspond to the center positions where $S$ is adjacent to a boundary edge.
		Bottom: the rectangular union $R$ of $S$ centered at $s_k$ and at $s_{k-1}$.}
	\label{fig:3-square}
\end{figure}

\begin{observation}
	The outer shell has two green-blue changes of color, one happening at $m$. 
\end{observation}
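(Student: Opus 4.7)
The plan is to view the outer shell of $C$ as a single closed cycle of edges (justified by Lemma~\ref{lem:righthand}) and to track, for each edge, the color (blue, green, or $m$) of the module it borders. The goal is to show this cyclic coloring decomposes into exactly three monochromatic arcs in the cyclic order green--$m$--blue, producing exactly two green-blue color changes, one of which occurs through $m$'s arc.

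First, I would pin down the local structure at $m$. Since $m$ has lexicographically largest potential $\Phi(m) = (x+y, x)$, the cells N, E, and NE of $m$ are empty; since $m$ has degree $2$, its facet-neighbors are $b_1$ (south) and $g_1$ (west). Hence the only exterior edges of $m$ are its N-edge and E-edge, which meet at $m$'s NE corner and form a single contiguous arc of two edges on the outer shell. The outer-shell edges immediately adjoining this arc at its two endpoints (the NW and SE corners of $m$) belong to a green module (adjacent to $g_1$) and to a blue module (adjacent to $b_1$), respectively. This accounts for one green-blue color change, happening at $m$.

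Second, to show exactly one additional green-blue change exists, I would prove that blue and green each contribute a single contiguous arc to the outer shell. Consider $C_B := \textrm{blue} \cup \{m\}$ and $C_G := \textrm{green} \cup \{m\}$; both are facet-connected, so their outer boundaries $\partial C_B, \partial C_G$ are single closed curves, and on $\partial C_B$ the blue edges form a single arc separated only by the exterior edges of the single cell $m$ (symmetrically for $\partial C_G$). The outer shell $\partial C$ is obtained from $\partial C_B \cup \partial C_G$ by identifying the shared segment of $m$'s N- and E-edges and deleting the edges $m$-$b_1$ and $m$-$g_1$, which become interior in $C$. Since blue and green share no facet edge, no blue edge of $\partial C_B$ is affected by the gluing, and blue's arc survives as a single arc on $\partial C$; by symmetry the same holds for green.

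The main obstacle is ruling out the case in which the gluing creates new holes of $C$ (not present in $C_B$ or $C_G$ alone) whose boundaries could clip a color's outer-shell arc into multiple sub-arcs. I would handle this using the cut-vertex property of $m$: any hole boundary cycle of $C$ that contained modules of both colors but avoided $m$ would provide a facet-connected path between blue and green in $C \setminus \{m\}$, contradicting that $m$ is a cut vertex. Hence every hole of $C$ is bounded either monochromatically (leaving the other color's outer-shell arc unaffected) or by a cycle passing through $m$ (so any color transition along that hole boundary is absorbed by $m$'s arc). Combined with the local analysis at $m$, this yields the cyclic decomposition green--$m$--blue on $\partial C$, and therefore exactly two green-blue changes, one of which happens at $m$, as claimed.
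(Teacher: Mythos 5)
The paper records this as an \emph{Observation} without any accompanying proof, so there is nothing in the paper itself to compare against; I will assess your argument on its own terms. Your opening step (the local picture at $m$) is fine, modulo a harmless swap of labels (with the paper's right-hand traversal the clockwise, i.e.\ southern, neighbor $g_1$ is the green one and the western neighbor $b_1$ is blue). The genuine gap is in your treatment of holes. You assert that any hole of $C$ whose boundary contains modules of both colors but not $m$ ``would provide a facet-connected path between blue and green in $C\setminus\{m\}$.'' That implication is false: the modules bordering a hole need not be pairwise \facet-adjacent, because a hole boundary may pass through corner ``pinches'' where consecutive modules share only a vertex. The simplest counterexample is a $1\times 1$ hole $h$ whose four \facet-neighbors $h+N$, $h+E$, $h+S$, $h+W$ are present but pairwise only diagonally adjacent (the four diagonal cells around $h$ being empty); coloring $h+N,h+S$ blue and $h+E,h+W$ green yields a hole bounded by both colors that certifies no facet-path between them. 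One can embed such a gadget in a \facet-connected $C$ with $m$ a cut vertex far away, so the ``hence every hole is monochromatic or passes through $m$'' conclusion does not follow, and your case analysis that is supposed to show each color's boundary edges survive as a single arc of $\partial C$ is left unsupported.

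More broadly, the reduction to $\partial C_B$ and $\partial C_G$ plus a gluing step understates where the difficulty lives. The subtle part of the observation is precisely the interaction of the two components' boundaries across diagonal contacts and across holes formed only after the union is taken; those are the situations that could in principle split a color's exterior-boundary edges into several arcs, and that is exactly what your hole argument was meant to rule out but does not. To make this approach rigorous you would need either (i) a correct structural claim about holes (e.g.\ reasoning on $\bar C$, the configuration with all holes filled, and a careful Jordan-curve/pinch-point case analysis at each blue--green corner of its outer boundary), or (ii) a direct argument on the right-hand traversal itself showing that each time the outer shell leaves the green component it must next meet $m$ or the blue component and never returns to green before reaching $m$. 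As written, the proof does not establish the observation.
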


Consider a grid-aligned $3\times 3$ square $S$ centered at the lattice cell of coordinates $s_0=(2,1)$, relative to $m$; see Figure~\ref{fig:3-square} (top).
Translate $S$ orthogonally clockwise one unit at a time along the boundary of the configuration until it reaches $s_0$ again.
Ignoring the positions where $S$ is not adjacent to a boundary edge (i.e., the positions of $S$ where one of its corners coincides
with a convex corner of $C$),
let $s_i$ be the $i$-th position of the center of $S$ along its boundary traversal, and let $s_m=s_0$. 
Refer to Figure~\ref{fig:3-square}.
Since $m$ is the maximum potential module of the configuration, $S$ is empty of modules at position $s_0$ and all subsequent positions, and it does not share edges with the blue component when centered at $s_0$, while at $s_{m-2}$ it is \facet-adjacent to the blue module $b_1$.
Let $s_k$ be the first position of $S$ along its boundary traversal where $S$ becomes \facet-adjacent to a blue module.
Since $S$ travels along the boundary of the configuration, the rectangular union $R$ of $S$ centered at $s_k$ and at $s_{k-1}$ should also be \facet-adjacent to a green module; see Figure~\ref{fig:3-square} (bottom).

The algorithm pivots the \aux\ modules clockwise, following the right-hand rule along the outer shell of the configuration, and brings them to the vicinity of $s_k$ to connect the blue and green components, thus forming a cycle containing $m$.

Let $g$ and $b$ be the closest pair of respectively green and blue modules \facet-adjacent to rectangle $R$, and let $d$ be the $L_1$ distance between them.
The bridging procedure depends on the value of $d$. 
We note that $d>1$ or else both modules would belong to the same component.

\paragraph{Case $d=2$.}

	Up to rotations and reflections, $g$ and $b$ must occur in one of the two configurations shown in Figure~\ref{fig:dist-2} (left).
	\begin{figure}[tbh]
		\centering
		\includegraphics[page=22,width=.81\textwidth]{Arxiv_musket_figures.pdf}
		\caption{Bridging when $d=2$.
			Striped cells are empty, filled cells are occupied.
			Green and blue indicate different connected components of $C\setminus\{m\}$.
			Dotted pairs of cells of the same color mean that a module must exist in at least one of the two cells.
			The orange cells labelled $a_i$ correspond to the positions of the bridging \aux\ modules. Their subindices indicate the order of their appearance.}
		\label{fig:dist-2}
	\end{figure}
	In the first case, the position North of $g$ (respectively, East of $b$) must be occupied or else $g$ (resp., $b$) would be outer-free, contradicting the precondition of the bridging operation that the configuration contains no outer-free modules.
	Therefore, placing one \aux\ module $a_1$ East of $g$ connects the components without
	changing the maximum and minimum potential of the configuration,
	as shown in the first row of Figure~\ref{fig:dist-2}.
	
	In the second case, the position between $g$ and $b$ must be empty, or else they would belong to the same connected component.
	The positions to the West and North (resp., South) of $g$ (resp., $b$) must contain a module, or else $g$ (resp., $b$) would be outer-free,
	contradicting the precondition of the bridging operation.
	
	The positions shown as blue dots in the figure must contain at least one module or else the module South of $b$ would be outer-free.
	As for the existence of modules on the top and bottom sides of $S$, there are three options.
	\begin{description}
		\item[Option 1:] Both positions $g+(1,1)$ and $g+(2,1)$ are empty. Then we can bridge the two components by sending three \aux\ modules in the order shown in Figure~\ref{fig:dist-2}.
		Notice that if the positions $b+(1,-1)$ and $b+(2,-1)$ are empty, an $x$-symmetric solution applies.
		\item[Option 2:] At least one of positions $g+(2,1)$ or $g+(3,1)$ contains a module. Such module(s) necessarily belong to the green component, by definition of $s_k$.
		If $g+(2,1)$ contains a module, the sequence of four \aux\ modules $a_1$--$a_4$ shown in Figure~\ref{fig:dist-2} can be sent to connect the green and blue components.
		Otherwise, the five \aux\ modules $a_1$--$a_5$ do the bridging.
		\item[Option 3:] Otherwise, the position North-East of $g$ is occupied, while positions $g+(2,1)$ and $g+(3,1)$ are empty. Therefore, position $g+(1,2)$ must be occupied too, or else the module North-East of $g$ would be outer-free. By symmetry, the position South-East of $b$ is occupied, and so is $b+(1,-2)$, while $b+(2,-1)$ and $b+(3,-1)$ are empty. In this case, a sequence of moves can bridge the components using the five \aux\ modules $a_1$--$a_5$ as depicted in Figure~\ref{fig:dist-2}.
	\end{description}
	
	In all cases, the diagonal dotted lines are used to indicate the existence of modules with greater and smaller potential
	than that of the \aux\ modules.
	Note that this bridging operation does not change the maximum and minimum potential of the configuration.
	
\paragraph{Case $d=3$.}

	Up to rotations and reflections, $g$ and $b$ must occur in one of the two configurations shown in Figure~\ref{fig:dist-3} (left).
	\begin{figure}[tbh]
		\centering
		\includegraphics[page=23,width=.6\textwidth]{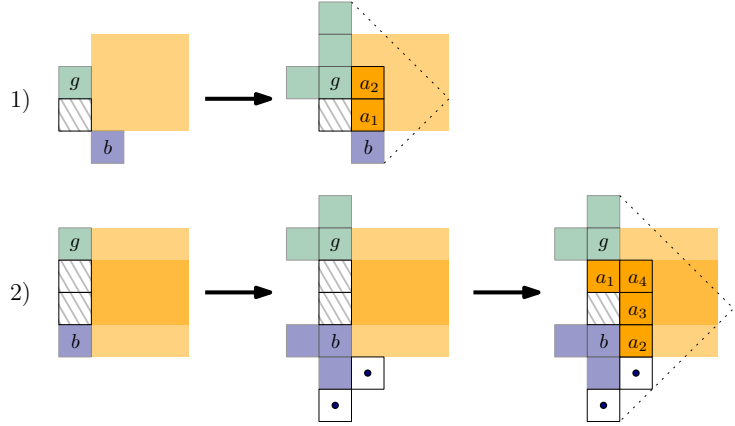}
		\caption{Bridging when $d=3$.
			Striped cells are empty, filled cells are occupied.
			Green and blue indicate different connected components of $C\setminus\{m\}$.
			The dotted pair of cells indicates that a module must exist in at least one of the two cells.
			The orange cells labelled $a_i$ correspond to the positions of the bridging \aux\ modules. Their subindexes indicate the order of their appearance.}
		\label{fig:dist-3}
	\end{figure}
	
	In the first case, the lattice positions West and North of $g$ must be occupied. Therefore, position $g+(0,2)$ must also be occupied. Otherwise the configuration would have a (green) outer-free module. Then a sequence of moves can place the two \aux\ modules $a_1$ and $a_2$ as shown in the first row of Figure~\ref{fig:dist-3}, connecting the green and blue components without
	changing the maximum and minimum potential of the configuration.
	
	In the second case the lattice positions West and North of $g$ must contain a module or $g$ would be outer-free. Symmetrically, the positions West and South of $b$ must also contain a module.
	Furthermore, either $b+(1,-1)$ or $b+(0,-2)$ must also be occupied, or else the module below $b$ would be outer-free.
	Therefore a sequence of moves can place four \aux\ modules $a_1$--$a_4$ as shown in the second row of Figure~\ref{fig:dist-3}. These \aux\ modules connect the green and blue components without changing the
	maximum and minimum potential of the configuration.
	
\paragraph{Case $d=4$.}

	Up to rotations and reflections, $g$ and $b$ must occur in one of the three configurations shown in Figure~\ref{fig:dist-4}.
	\begin{figure}[h]
		\centering
		\includegraphics[page=24,width=.6\textwidth]{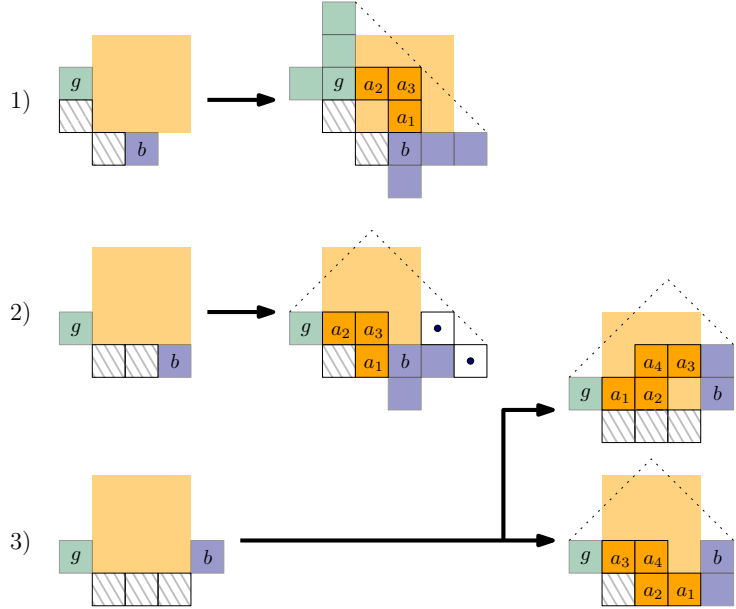}
		\caption{Bridging when $d=4$.
			Striped cells are empty, filled cells are occupied.
			Green and blue indicate different connected components of $C\setminus\{m\}$.
			The dotted pair of cells indicates that a module must exist in at least one of the two cells.
			The orange cells labelled $a_i$ correspond to the positions of the bridging \aux\ modules. Their subindexes indicate the order of their appearance.}
		\label{fig:dist-4}
	\end{figure}
	In the first case, the lattice positions $g+(-1,0)$, $g+(0,1)$ and $g+(0,2)$ must contain a module, or else there would exist an outer-free module. Analogously, $b+(0,-1)$, $b+(1,0)$, and $b+(2,0)$ must be occupied for the same reason.
	Then, a sequence of moves can place three \aux\ modules $a_1$--$a_3$ as shown in the first row of Figure~\ref{fig:dist-4}, connecting the green and blue components without changing the maximum and minimum potential of the configuration.
	In the second case, $b+(0,-1)$, $b+(1,0)$, and  either $b+(2,0)$ or $b+(1,1)$ must be occupied, since the configuration does not have outer-free modules.
	Therefore, a sequence of moves can place three \aux\ modules $a_1$--$a_3$ as shown in the second row of Figure~\ref{fig:dist-4}, connecting the green and blue components without changing the maximum and minimum potential of the configuration.
	Finally, in the third case,
	either $b+(0,-1)$ or $b+(0,1)$ must be occupied. 
	In each case,
	a sequence of moves places four \aux\ modules $a_1$--$a_4$ as shown in the third row of Figure~\ref{fig:dist-4}.
	These \aux\ modules connect the green and blue components without changing the
	maximum and minimum potential of the configuration.
	
\paragraph{Case $d=5$.}

	Up to rotations and reflections, $g$ and $b$ must occur in one of the two configurations shown in Figure~\ref{fig:dist-5}.
	\begin{figure}[h]
		\centering
		\includegraphics[page=25,width=\textwidth]{Arxiv_musket_figures.pdf}
		\caption{Bridging when $d=5$.
			Striped cells are empty, filled cells are occupied.
			Green and blue indicate different connected components of $C\setminus\{m\}$.
			The dotted pairs of cells indicate that a module must exist in at least one of the two cells.
			The orange cells labelled $a_i$ correspond to the positions of the bridging \aux\ modules. Their subindexes indicate the order of their appearance.}
		\label{fig:dist-5}
	\end{figure}
	In the first case, $g+(-1,0)$, $g+(0,1)$, $g+(0,2)$, $b+(0,-1)$, and $b+(1,0)$ must be occupied, because the configuration has no outer-free modules.
	Then a sequence of moves can place four \aux\ modules $a_1$--$a_4$ as shown in Figure~\ref{fig:dist-5} (left), connecting the green and blue components without changing the maximum and minimum potential of the configuration.
	In the second case, $b+(0,-1)$, $b+(1,0)$, and either $b+(1,1)$ or $b+(2,0)$ must be occupied by modules, since the configuration cannot have outer-free modules.
	Then a sequence of moves can place four \aux\ modules $a_1$--$a_4$ as shown in Figure~\ref{fig:dist-5} (right).
	These \aux\ modules connect the green and blue components without
	changing the maximum and minimum potential of the configuration.
	
\paragraph{Case $d=6$.}

	Up to rotations and reflections, $g$ and $b$ must occur in the configuration shown in Figure~\ref{fig:dist-6}.
	\begin{figure}[h]
		\centering
		\includegraphics[page=26,width=.5\textwidth]{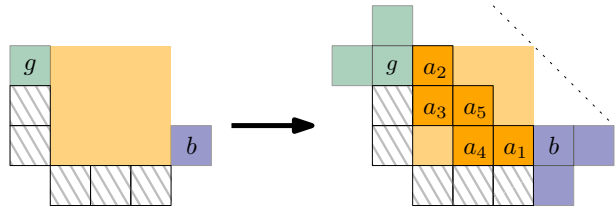}
		\caption{Bridging when $d=6$.
			Striped cells are empty, filled cells are occupied.
			Green and blue indicate different connected components of $C\setminus\{m\}$.
			The orange cells labelled $a_i$ correspond to the positions of the bridging \aux\ modules. Their subindexes indicate the order of their appearance. Note that $a_3$ needs to be moved to the empty space to its left for $a_4$ to be moved to its position.}
		\label{fig:dist-6}
	\end{figure}
	Since the configuration has no outer-free modules, the lattice positions West and North of $g$ must contain modules, and so do the positions East and South of $b$.
	Then a sequence of moves can place five \aux\ modules $a_1$--$a_5$ as shown in Figure~\ref{fig:dist-6}.
	These \aux\ modules connect the green and blue components without changing
	the maximum and minimum potential of the configuration.
	
\paragraph{Bridging result.}
	
	As a result, we obtain the following lemma.
	
\begin{lemma}
	\label{lem:bridge}
	Let $m$ be the North-East module, i.e., the maximum potential module of a given configuration $C$.
	The bridging procedure for $m$ uses $O(n)$ pivoting operations and at most five \aux\ modules,
	and does not change the maximum and minimum potential of the configuration.
	After the procedure ends, $m$ is still the North-East module of the modified configuration, but no longer a cut vertex of its \facet-adjacency graph.
\end{lemma}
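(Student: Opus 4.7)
The plan is to prove the lemma by consolidating the preceding case analysis on $d$, the $L_1$ distance between the closest green module $g$ and the closest blue module $b$ that are \facet-adjacent to the rectangle $R$. First I would show that $d$ is bounded: since both $g$ and $b$ are \facet-adjacent to the $3\times 5$ (or $5\times 3$) rectangle $R$, the extremal placements of $g$ and $b$ force $d \le 6$; and $d \ge 2$ because $d=1$ would place $g$ and $b$ in the same connected component of $C\setminus\{m\}$, contradicting the color assignment. This shows that the cases $d\in\{2,3,4,5,6\}$ already treated above are exhaustive.

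For each value of $d$, the argument verifies that the relative placements of $g$ and $b$ displayed in Figures~\ref{fig:dist-2}--\ref{fig:dist-6} are exhaustive up to rotation and reflection, and that every cell marked as ``must be occupied'' actually is occupied. The latter is the central geometric step: it uses the precondition that no module of $C$ is outer-free (otherwise the main algorithm would have processed such a module instead of invoking bridging), combined with the observation that a module \facet-adjacent to the outer shell on two consecutive sides, lacking a neighbor in the appropriate supporting direction, would itself be outer-free. A direct inspection of the figures then shows that in every case at most five \aux\ modules are placed, that every intermediate pivot is a legal move of Set~3, and that after bridging the cells adjacent to $m$ lie on a cycle $b_1,\ldots,b,a_1,\ldots,a_j,g,\ldots,g_1,m$ of the adjacency graph, so $m$ is no longer a cut vertex.

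It remains to check the quantitative and potential-preservation claims. By inspection of Figures~\ref{fig:dist-2}--\ref{fig:dist-6}, all the \aux\ modules $a_i$ occupy cells inside or adjacent to the translated copy of the $3\times 3$ square $S$ near $s_k$, which lies on the boundary of $C$ strictly clockwise from $m$. Their potentials therefore lie strictly between $\Phi_{\min}(C)$ and $\Phi(m)$, so neither $\Phi_{\max}$ nor $\Phi_{\min}$ changes and $m$ remains the North-East module. Finally, by Lemma~\ref{lem:righthand}, each \aux\ module is brought to its destination by following the right-hand rule along the outer shell, whose length is $O(n)$; since at most five \aux\ modules are used, the total number of pivoting operations is $O(n)$.

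The main obstacle is the completeness of the case enumeration: one must argue that the configurations in Figures~\ref{fig:dist-2}--\ref{fig:dist-6} really cover every way $g$ and $b$ can sit at $L_1$ distance $d$ while being \facet-adjacent to $R$ and belonging to distinct components, and that the depicted pivot sequences are each valid at every intermediate step under the monkey model. The ``no outer-free module'' precondition is the key lever that prunes the branching and guarantees the occupancy of the cells needed as pivot anchors; once that is in place, the remaining claims reduce to routine figure inspection together with an application of Lemma~\ref{lem:righthand}.
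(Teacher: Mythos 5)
Your overall structure matches the paper's: both consolidate the prior case analysis over $d\in\{2,\dots,6\}$, rely on the ``no outer-free module'' precondition to force occupancy of the needed anchor cells, and conclude by noting that each \aux\ traversal of the outer shell costs $O(n)$ pivots. Two points need correction, one minor, one more substantive.

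First, a small slip: the rectangle $R$ is the union of two $3\times3$ squares differing by a single unit translation, so $R$ is $3\times4$ (or $4\times3$), not $3\times5$. The conclusion $d\le 6$ still holds, but not directly from a $3\times5$ window; one must also use that $g$ and $b$ lie on opposite ``sides'' of the corridor traced by $S$, which is what rules out the extremal adjacent pairs at $L_1$ distance $7$.

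Second, and more importantly, your argument that the \aux\ cells lie ``strictly clockwise from $m$'' and \emph{therefore} have potentials strictly between $\Phi_{\min}(C)$ and $\Phi(m)$ does not work. Position along the right-hand boundary traversal has no monotone relationship with $\Phi=(x+y,x)$: the traversal wraps around the whole shape and visits cells of both higher and lower potential than intermediate modules, and cells adjacent to the very first positions of $S$ near $s_0$ (e.g.\ $m+(1,0)$, $m+(0,2)$) actually have potential exceeding $\Phi(m)$. The paper's claim is established differently: in each of the figures for $d=2,\dots,6$, the diagonal dotted lines mark \emph{existing} modules of $C$ with potential both strictly greater and strictly less than every \aux\ cell in that figure (typically the modules $g$, $b$, and their forced neighbors bracket the bridge). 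That local bracketing — not global position relative to $m$ — is what preserves $\Phi_{\max}$ and $\Phi_{\min}$ and hence keeps $m$ the North-East module. Your proof should replace the clockwise-from-$m$ heuristic with this per-case bracketing observation, which your own ``direct inspection of the figures'' appeal already implicitly contains.
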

	
	\begin{proof}
		It is easy to see that $d$ can only be 2, 3, 4, 5, or 6.
		For each of the possible values of $d$, we have proven that the configuration can be bridged, i.e., that a connection can be made between the two connected components of $C\setminus \{m\}$, using at most five \aux\ modules whose new locations do not increase the potential function of the configuration. Each \aux\ module performs $O(n)$ pivoting operations along the boundary of $C$, for a total of $O(n)$ pivoting operations.
		Since the bridging procedure adds a new connection between the two connected components of $C\setminus \{m\}$, module $m$ is no longer a cut vertex.
		Since the bridging procedure does not increase the potential of the configuration, $m$ is still its maximum potential module.
	\end{proof}

Notice also that the bound of five \aux\ modules for bridging is tight: Figure~\ref{fig:tight-bound} shows an example requiring five \aux\ modules for bridging. 

\begin{figure}[htb]
	\centering
	\includegraphics[page=27,width=.35\textwidth]{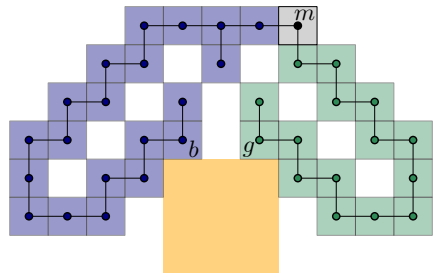}
	\caption{A rigid configuration that requires the addition of five \aux\ modules for bridging.}
	\label{fig:tight-bound}
\end{figure}

\subsubsection{Reconfiguration step}

We now need to guarantee that module $m$ is able to move
and thus, it can pivot along the outer shell of $C$
and join the canonical strip.
This is clear when $m$ is disjoint from the neighborhood of $m$.
We also want to show that we can liberate and send to the canonical strip either the \auxes\ used or at least as many modules as \auxes\ used. 
We will extend the analysis of the neighborhood of $m$,
and for each of the possible cases we will show that
either invoking the bridging procedure or
explicitly placing \aux\ modules
 can guarantee that.

Progress is measured in terms of the potential gap $\Delta\Phi=\Phi_{max}-\Phi_{min}$ of the configuration and the size of $C$ (recall that $C$ includes all modules that are not part of the canonical strip).
We will show that each reconfiguration step decreases the potential gap and/or the size of $C$.

Recall the setting of the configuration before performing a reconfiguration step.
Module $m$ is the North-East module (i.e., the highest module of maximum potential in a given configuration $C$),
it has degree $2$ and connects two connected components of $C$, one blue (extending counterclockwise) and one green (extending clockwise).
Let $b_1$ and $g_1$ be the blue and green modules adjacent to $m$, respectively.
Position $m +(-1,-1)$ must be empty, since otherwise the green and blue modules would belong to the same component. 

\begin{lemma}\label{lem:b1g1-degree}
	The neighborhood around $m$ is in one of the three configurations shown in Figure~\ref{fig:b1g1-degree} (b1--b3): its South neighbor $g_1$ has degree $2$ and a South neighbor $g_2$, and its West neighbor $b_1$ has maximum degree $2$.
\end{lemma}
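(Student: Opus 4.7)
The plan is to rule out the extreme degree cases for $g_1$ and $b_1$ that would contradict the bridging precondition ``$C$ has no outer-free modules.'' Throughout I use that $m$'s maximum-potential status forces the cells $m+(0,1), m+(1,1), m+(1,0)$ (N, NE, E of $m$), and also $m+(1,-1)$ (SE of $m$, whose potential $(x_m+y_m, x_m+1)$ strictly exceeds $m$'s), to be empty; the cell $m+(-1,-1)$ is empty by hypothesis.

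For $g_1$, the only facet-neighbors besides $m$ could sit at its east, west, or south cells, and the first two are already forced empty. So $g_1$ has degree $1$ or $2$ according to whether its south neighbor $g_2 := m+(0,-2)$ is absent or present. I would argue for contradiction that if $g_2$ were absent, $g_1$ would be an outer-free module: being a leaf, its removal does not disconnect $C$, and I would exhibit a clockwise pivot. The natural candidate is the $90^\circ$ Set~1 rotation of $g_1$ around the vertex $(0,0)$ shared with $m$, taking $g_1$ to $m+(-1,-1)$ and leaving it \facet-adjacent to $b_1$; this is valid in Set~1 because $b_1$ sits on the opposite edge of $m$ incident to the rotation center. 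The only swept cell not forced empty by potential is $m+(-1,-2)$, and if that cell contains a module $x$, then the Set~3 monkey pivot of $g_1$ around the vertex $(0,-1)$ (shared with $x$) instead takes $g_1$ to $m+(0,-2)$ and leaves it adjacent to $x$. Either way $g_1$ would be outer-free, contradicting the precondition, so $g_2$ must exist and $g_1$ has degree exactly $2$.

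For $b_1$, its possible facet-neighbors besides $m$ are $b_0 := m+(-1,1)$ (N) and $b_2 := m+(-2,0)$ (W). Suppose for contradiction that $b_1$ has degree $3$, i.e., both $b_0$ and $b_2$ are present. I will show that $b_0$ is then outer-free. First, $b_0$ is a NE corner, because its N neighbor $m+(-1,2)$ has potential $(x_m+y_m+1, x_m-1)$ strictly exceeding $m$'s (hence empty) and its E neighbor is N of $m$ (empty). Second, $b_0$ is not a cut vertex: if its potential W neighbor $b_0' := m+(-2,1)$ is absent then $b_0$ is a leaf, while if $b_0'$ is present the four modules $b_0, b_1, b_2, b_0'$ form a solid $2\times 2$ block and removing $b_0$ leaves the path $b_1$--$b_2$--$b_0'$ intact. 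Third, $b_0$ admits a clockwise $90^\circ$ Set~1 pivot around the vertex $m+(0,1)$ (the NW corner of $m$, shared with $b_1$ and $b_0$), landing at N of $m$; the swept region lies entirely in cells with potentials strictly exceeding $m$'s and hence empty. So $b_0$ is outer-free, a contradiction, and $b_1$ has degree at most $2$.

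The cleanest part of the argument is the $b_1$ case, since every swept cell lies ``above'' $m$'s potential and is automatically empty. The main technical subtlety will be in the $g_1$ case, where $m+(-1,-2)$ has smaller potential than $m$ and is not automatically empty; this is precisely the situation for which the additional flexibility of Set~3 monkey jumps is needed in order to close the argument.
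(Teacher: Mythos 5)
Your $b_1$ argument is fine and is essentially the paper's: a north neighbor $b'$ of a degree-$3$ module $b_1$ is a non-cut corner whose clockwise pivot sweeps only through cells of potential strictly higher than $\Phi(m)$, which are forced empty. That part matches.

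There is, however, a genuine gap in the $g_1$ half. Your primary $90^\circ$ pivot of $g_1$ to $m+(-1,-1)$ does indeed require the cell $m+(-1,-2)$, and your fallback for that case is a monkey/leapfrog jump about $g_1$'s corner shared with the module $x$ at $m+(-1,-2)$, landing at $m+(0,-2)$. But that jump is not free either: as $g_1$ rotates clockwise about its southwest corner, its opposite (northeast) corner swings through the cell $m+(1,-2)$ (for instance, $60^\circ$ into the rotation that far corner sits strictly inside $m+(1,-2)$). Since $m+(1,-2)$ has \emph{lower} potential than $m$, it is not forced empty by $m$'s maximality, and if it is occupied, the monkey jump is also blocked. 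Indeed, in the configuration where $g_1$ has degree $1$ and \emph{both} diagonals $m+(-1,-2)$ and $m+(1,-2)$ are occupied, every candidate clockwise move of $g_1$ collides with one of $m$, $b_1$, or one of the two diagonal modules, so $g_1$ is genuinely not outer-free — your ``either way'' does not cover this case. The paper is careful to sidestep exactly this: instead of arguing that $g_1$ itself can pivot, it observes that if $g_1$ has degree $1$ and is not outer-free then $m+(1,-2)$ must be occupied by some module $g'$; the cells West, North, and East of $g'$ are all forced empty (the last two by $m$'s maximality), so $g'$ is itself a degree-$1$ module, and crucially $g'$'s clockwise pivot sweeps \emph{eastward} into cells all of strictly higher potential than $\Phi(m)$, hence empty. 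So $g'$ (not $g_1$) is the outer-free module that yields the contradiction, and no case analysis of $g_1$'s own blocked pivots is needed.

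Separately, your proof stops at the degree bounds and does not establish the full statement: the lemma also asserts that the neighborhood falls into one of the three specific configurations (b.1)--(b.3). The paper goes on to show that when $b_1$ has degree $1$ the cell $b_1+(-1,1)$ must be occupied by a green module (else $b_1$ is outer-free), when $b_1$ has a north neighbor that same cell must be occupied by a blue module (else the north neighbor is outer-free), and otherwise $b_1$'s second neighbor is to its West. That final classification is missing from your write-up.
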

\begin{proof}
	By the definition of $m$, the positions $m +(1,0)$, $m +(0,1)$, and $m+(1,-1)$ must be empty, otherwise $C$ would contain a module with potential higher than that of $m$, a contradiction.
	
	\begin{figure}[tbh]
		\centering
		\includegraphics[page=28,width=.8\textwidth]{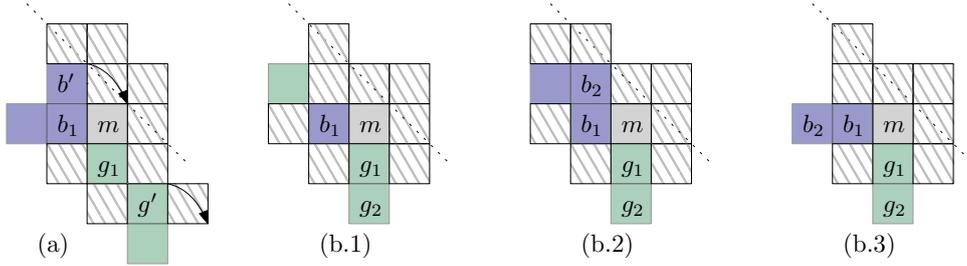}
		\caption{(a) If $g'$ exists, it is outer-free; if $b_1$ is of degree $3$, $b'$ is outer-free (b.1) $b_1$ has degree $1$
			(b.2) $b_1$ has a North neighbor (b.3) $b_1$ has a West neighbor.}
		\label{fig:b1g1-degree}
	\end{figure}
	
	Since the positions West and East of $g_1$ are empty, $g_1$ has maximum degree $2$. Assume first that $g_1$ has degree $1$. In this case the position $g_1 + (1,-1)$ must be occupied by a module $g'$, otherwise $g_1$ would be an outer-free module.
	Refer to Figure~\ref{fig:b1g1-degree} (a). Note that the positions West, North, and East of $g'$ are empty (the last two due to the definition of $m$), which means that $g'$ has degree $1$ and can pivot clockwise without disconnecting $C$, contradicting the fact that $C$ has no outer-free modules. It follows that $g_1$ is of degree $2$ and has a South neighbor.
	
	Since the position below $b_1$ is empty, $b_1$ has maximum degree $3$. Assume first that $b_1$ has degree $3$, as shown in Figure~\ref{fig:b1g1-degree} (a). Note, however, that the positions above and to the right of its North neighbor $b'$ are all empty (because these positions have potential higher than that of $m$, which is maximum) and therefore $b'$ can pivot clockwise without disconnecting $C$, contradicting the fact that $C$ has no outer-free modules. Thus $b_1$ has degree at most $2$.
	
	We now argue that the neighborhood around $m$ is in one of the three configurations shown in Figure~\ref{fig:b1g1-degree} (b.1--b.3). If $b_1$ has degree $1$, then the position $b_1 + (-1,1)$ must be occupied, otherwise $b_1$ would be outer-free; see Figure~\ref{fig:b1g1-degree} (b.1). 
	Note that this position must be occupied by a green module, otherwise the blue component would be disconnected.
	If $b_1$ has a North neighbor, then the position
	$b_1 + (-1,1)$ must also be occupied, otherwise the North neighbor would be outer-free; see Figure~\ref{fig:b1g1-degree} (b.2). 
	Note that this position must be occupied by a blue module, otherwise the green and blue components would be connected.
	The only case left is that $b_1$ has degree 2 and its neighbor is to its West; see Figure~\ref{fig:b1g1-degree} (b.3).
\end{proof}

Let $g_2$ be the South neighbor of $g_1$ guaranteed by Lemma~\ref{lem:b1g1-degree}, and let $b_2$ be the North or West neighbor of $b_1$ (if one exists). Our reconfiguration procedure depends on the degrees of $g_2$, $b_1$ and $b_2$ (if it exists).

\paragraph{Reconfiguring when $b_1$ has degree 1.}

\begin{figure}[tbh]
	\centering
	\includegraphics[page=29,width=.9\textwidth]{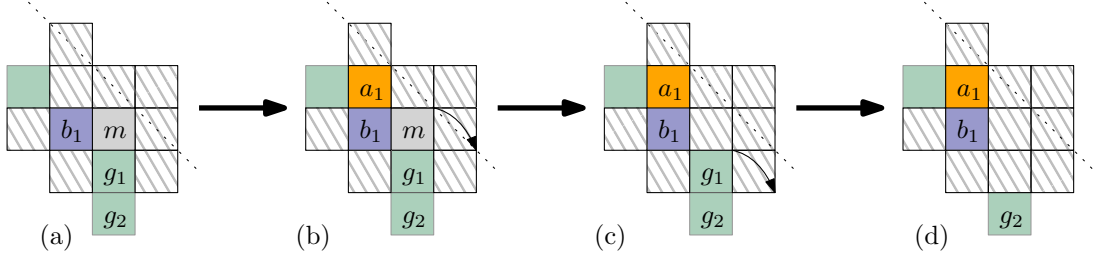}
	\caption{Reconfiguring when $b_1$ has degree $1$.}
	\label{fig:reconf-b1degree1}
\end{figure}
This case is depicted in Figures~\ref{fig:b1g1-degree} (b.1) and~\ref{fig:reconf-b1degree1}.
In this case we simply place a \aux\ module above $b_1$; it is labelled $a_1$ in Figure~\ref{fig:reconf-b1degree1}. 
This \aux\ module connects the blue and green components, leaving $m$ free to pivot along the outer shell of $C$ and join the canonical strip; see Figure~\ref{fig:reconf-b1degree1} (b). 
Module $g_1$ then becomes outer-free and can pivot clockwise to join the canonical strip; see Figure~\ref{fig:reconf-b1degree1} (c).

\begin{lemma}\label{lem:r1}
	This reconfiguration step uses $O(n)$ pivoting operations to transform $C$ into \afacet-connected configuration of smaller size and smaller potential gap $\Delta\Phi$.
\end{lemma}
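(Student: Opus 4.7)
The plan is to establish the four assertions of the lemma in turn: the $O(n)$ pivot count, \facet-connectivity of the resulting configuration, the strict decrease of $|C|$, and the strict decrease of $\Delta\Phi$. Each becomes essentially routine once we track what happens to the three modules $a_1$, $m$, and $g_1$ over the course of the step.

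First I would show that the \aux\ module $a_1$ can reach its target cell $b_1+(0,1)$ via a right-hand-rule boundary traversal. By Lemma~\ref{lem:righthand} this takes $O(n)$ pivots while keeping $C$ \facet-connected throughout (since $C$ is static during the traversal). Once $a_1$ lands, it is \facet-adjacent to $b_1$ (its South neighbor, in the blue component) and to the green module at $b_1+(-1,1)$ (its West neighbor), whose existence was established inside the proof of Lemma~\ref{lem:b1g1-degree}. Thus $a_1$ bridges the blue and green components \emph{independently of $m$}, so deleting $m$ still leaves the configuration \facet-connected. Consequently $m$ is no longer a cut vertex and can pivot clockwise along the outer shell to the tip of the canonical strip in a further $O(n)$ pivots by Lemma~\ref{lem:righthand}. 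After $m$ leaves, $g_1$ is a leaf of degree $1$ (it had degree exactly $2$ by Lemma~\ref{lem:b1g1-degree}, with $m$ as its North neighbor) and hence not a cut vertex; moreover its North, North-East and East cells are all empty (the latter two because $m$ had maximum potential, the first because $m$ has just left), so $g_1$ is outer-free and can pivot clockwise along the outer shell to the strip in $O(n)$ more pivots, again by Lemma~\ref{lem:righthand}. Summing the three journeys gives the $O(n)$ bound.

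For the remaining two assertions, the net effect on $C$ is the removal of $m$ and $g_1$ together with the insertion of $a_1$, so $|C|$ strictly decreases by one. For the potential gap, recall that $\Phi$ is a total order and $m$ was the unique maximum-potential module of $C$; every surviving original module of $C$ therefore had potential strictly less than $\Phi(m)$, and the newly placed $a_1$ at position $m+(-1,0)$ satisfies $\Phi(a_1)<\Phi(m)$, so $\Phi_{\max}$ strictly decreases. The minimum $\Phi_{\min}$ cannot decrease: every surviving module kept its position, and $a_1$ is placed in the North-East region, well above the South-West. Hence $\Delta\Phi$ strictly shrinks.

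The main obstacle is the bookkeeping that keeps the configuration \facet-connected at every intermediate moment of the three journeys. The critical input is the existence of the green module at $b_1+(-1,1)$ supplied by the proof of Lemma~\ref{lem:b1g1-degree}: without it, $a_1$ would only be attached to the blue side, and removing $m$ would disconnect the robot. Once this neighbor is identified, $a_1$ functions as a durable bridge allowing both $m$ and subsequently $g_1$ to leave safely, and every other step reduces to an application of Lemma~\ref{lem:righthand}.
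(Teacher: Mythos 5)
Your argument is correct and follows the same route as the paper's: track that $a_1$ enters $C$ while $m$ and $g_1$ leave, observe that $a_1$ bridges the blue singleton $\{b_1\}$ and the green component (via the green module at $b_1+(-1,1)$ established in the proof of Lemma~\ref{lem:b1g1-degree}), and account for the potential and size changes. You add useful explicit justification that the paper asserts without detail, in particular why $g_1$ is outer-free after $m$ departs and why $a_1$ alone suffices as a bridge. One small slip: in the last paragraph you place $a_1$ at $m+(-1,0)$, which is $b_1$'s own cell; its actual cell is $b_1+(0,1)=m+(-1,1)$, as you correctly state earlier. Either way $\Phi(a_1)<\Phi(m)$, so the conclusion is unaffected.
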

\begin{proof}
	In addition to the $O(n)$ pivoting operations used by $m$ to join the canonical strip, this reconfiguration step uses only a constant number of pivoting operations, as indicated by Figure~\ref{fig:reconf-b1degree1}. The output of this reconfiguration step, shown in Figure~\ref{fig:reconf-b1degree1} (d), is \facet-connected, because each of the blue and green components are \facet-connected, and $a_1$ \facet-connects to each of them. Because the \aux\ module $a_1$ joins $C$ and the configuration modules $m$ and $g_1$ leave $C$, both the size and the maximum potential of $C$ decrease as a result. Note that the minimum potential of $C$ stays the same, therefore the potential gap $\Delta\Phi$ decreases.
\end{proof}

\noindent
From this point on we assume that $b_1$ has degree $2$ and therefore $b_2$ exists.

\paragraph{Reconfiguring when $g_2$ or $b_2$ has degree 1.}

First observe that, if $g_2$ has degree one, then the green component consists of $g_1$ and $g_2$ only (otherwise the green component would be disconnected). Also note that position $g_2 + (1, -1)$ must be occupied by a blue module, otherwise $g_2$ would be outer-free. We handle this situation similarly to the one above: place a \aux\ module (labelled $a_1$ in Figure~\ref{fig:reconf-g2degree1}) to the right of $g_2$, to connect the blue and green components. This leaves $m$ free to pivot along the outer shell of $C$  and join the canonical strip; see Figure~\ref{fig:reconf-g2degree1} (b). It is then followed by $g_1$; see Figure~\ref{fig:reconf-g2degree1} (c). The argument is similar for the case when $b_2$ exists and has degree~$1$.

\begin{figure}[tbh]
	\centering
	\includegraphics[page=30,width=.85\textwidth]{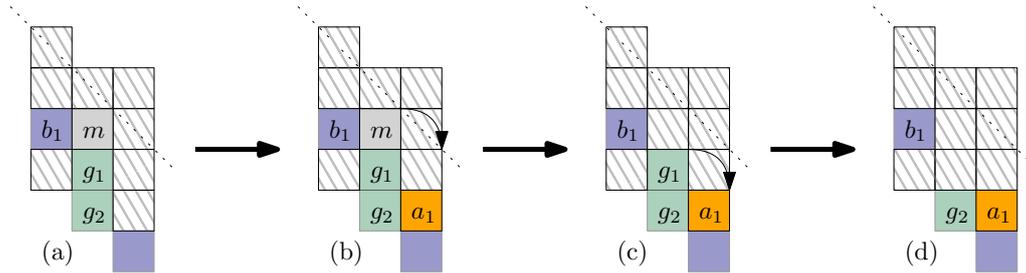}
	\caption{Reconfiguring when $g_2$ and $b_2$ have degree $2$.}
	\label{fig:reconf-g2degree1}
\end{figure}

\begin{lemma}\label{lem:r2}
	This reconfiguration step uses $O(n)$ pivoting operations to transform $C$ into \afacet-connected configuration of smaller size
	and smaller potential gap $\Delta\Phi$.
\end{lemma}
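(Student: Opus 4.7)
The proof will mirror the structure of Lemma~\ref{lem:r1}, adapted to the current neighborhood configuration. The plan is to verify four things: facet-connectivity is preserved, the move count is $O(n)$, the size of $C$ strictly decreases, and the potential gap $\Delta\Phi$ strictly decreases.

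First I would argue connectivity. By the hypothesis that $g_2$ (resp.\ $b_2$) has degree $1$, the green (resp.\ blue) component is forced into a very restricted local shape: in particular, the cell $g_2+(1,-1)$ must be occupied by a blue module (otherwise $g_2$ would be outer-free, contradicting the precondition of this step), so placing a musketeer $a_1$ at $g_2+(1,0)$ creates facet-adjacencies to both $g_2$ (green) and to the blue module below it. Hence after inserting $a_1$, the components merge into a single facet-connected configuration even with $m$ removed. Then Lemma~\ref{lem:righthand} applies and $m$ can traverse the outer shell of this new static configuration using the right-hand rule to reach the tip of the strip, and afterwards $g_1$, which is now a corner with both its East and North empty and whose removal does not disconnect anything, becomes outer-free and can likewise pivot to the strip.

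Next I would count operations. Transporting the musketeer $a_1$ from the strip to its target cell along the outer shell costs $O(n)$ pivots by Lemma~\ref{lem:righthand}, and the subsequent traversals of $m$ and $g_1$ to the strip cost $O(n)$ pivots each. Summing gives $O(n)$ total, as claimed. The symmetric case where $b_2$ has degree $1$ is handled identically, with $a_1$ placed on the appropriate side of $b_2$ to bridge the blue and green components; the case analysis is routine from Figure~\ref{fig:b1g1-degree}~(b.2) and (b.3).

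The remaining claim is the potential accounting, which I expect to be the only subtlety. Writing $m=(x_0,y_0)$, the musketeer settles at position $(x_0+1,y_0-2)$ (and analogously in the $b_2$-case), so $\Phi(a_1)=(x_0+y_0-1,x_0+1)$, which is lexicographically smaller than $\Phi(m)=(x_0+y_0,x_0)$. Since $m$ leaves $C$, the maximum potential of $C$ strictly decreases. Both $m$ and $g_1$ have potential strictly greater than any module in the blue and green components that lie further down or left, so neither of them realizes $\Phi_{\min}$; in particular, the South-West module of $C$ is untouched, and $\Phi_{\min}$ is unchanged. Therefore $\Delta\Phi$ strictly decreases. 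Finally, two modules leave $C$ while one (the musketeer) joins it, so $|C|$ decreases by one. The main obstacle, and the only place where care is required, is verifying that $a_1$'s insertion really does not raise $\Phi_{\max}$ and that $\Phi_{\min}$ is preserved in both the $g_2$- and $b_2$-subcases; this reduces to a direct coordinate check in each local picture.
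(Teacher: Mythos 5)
Your proof takes essentially the same approach as the paper's: place a single \aux\ module beside $g_2$ (resp.\ $b_2$) to bridge the blue and green components, send $m$ and then $g_1$ along the outer shell to the strip, and observe that one net module departs while $\Phi_{\max}$ drops (since $m$ leaves and every remaining or added module, including $a_1$, has strictly smaller potential) and $\Phi_{\min}$ is unaffected. The paper's own proof is slightly terser on the coordinate bookkeeping and, like yours, treats the $b_2$ subcase by symmetry, so the two arguments match in both structure and level of rigor.
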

\begin{proof}
	In addition to the $O(n)$ pivoting operations used by $m$ to join the canonical strip, this reconfiguration step uses only a constant number of pivoting operations, as indicated by Figure~\ref{fig:reconf-g2degree1}. The output of this reconfiguration step is \facet-connected, because each of the blue and green components are \facet-connected, and $a_1$ \facet-connects to each of them; see  Figure~\ref{fig:reconf-g2degree1} (d). Because the \aux\ module $a_1$ joins $C$ and the configuration modules $m$ and $g_1$ leave $C$, both the size and the maximum potential of $C$ decrease as a result. Note that the minimum potential of $C$ remains unchanged, therefore the potential gap $\Delta\Phi$ decreases.
\end{proof}

\paragraph{Reconfiguring when $g_2$ and $b_2$ have degree 2.}

We start with the configurations guaranteed by Lemma~\ref{lem:b1g1-degree} and depicted in Figures~\ref{fig:b1g1-degree} (b.2--b.3), and place the second neighbor of $b_2$ and $g_2$ in all possible positions. The result for $b_2$ ($g_2$, resp.) is depicted on the top (bottom, resp.) of  Figure~\ref{fig:reconf-b2g2degree2}.
\begin{figure}[tbh]
	\centering
	\includegraphics[page=31,width=.85\textwidth]{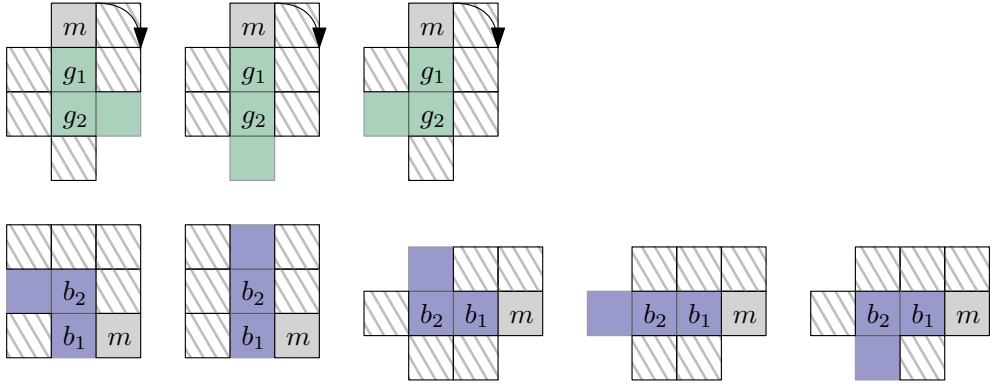}
	\caption{Reconfiguring when $g_2$ and $b_2$ have degree $2$.}
	\label{fig:reconf-b2g2degree2}
\end{figure}
Notice that $b_2$ cannot have a North neighbor since it would have a higher potential than $m$.
In this case we invoke the bridging procedure to connect the green and blue components with a set of \aux\ modules, so that $m$ becomes outer-free and can pivot clockwise to join the canonical strip. For each combination of (top, bottom) configurations depicted in Figure~\ref{fig:reconf-b2g2degree2}, $g_1$, $g_2$, $b_1$ and $b_2$ become outer-free and pivot clockwise to  join the canonical strip in this order.

\begin{lemma}\label{lem:r3}
	This reconfiguration step uses $O(n)$ pivoting operations to transform $C$ into \afacet-connected configuration of smaller potential gap $\Delta\Phi$.
\end{lemma}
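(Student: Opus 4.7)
The plan is to combine the bridging procedure with a cascade of five module departures. First I would invoke Lemma~\ref{lem:bridge} to place at most five musketeer modules connecting the green and blue components through a short bridge near $s_k$; by that lemma, this costs $O(n)$ moves and preserves both $\Phi_{max}$ and $\Phi_{min}$. Since $m$ was a cut vertex only because the green and blue components were otherwise separated, bridging makes $m$ non-cut while keeping it the unique module of maximum potential. By Lemma~\ref{lem:righthand}, $m$ is then outer-free and pivots clockwise along the outer shell to the tip of the canonical strip in $O(n)$ monkey moves.

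Next, I would show that once $m$ has left, the modules $g_1, g_2, b_1, b_2$ become outer-free one after the other (in that order) and each pivots to the strip, as depicted in Figure~\ref{fig:reconf-b2g2degree2}. The bridge keeps $C$ connected after the removal of $m$, so $g_1$ is not a cut vertex; its North and East neighbors are empty (because $m$ has just vacated the first and the second was already empty by maximality of $\Phi(m)$), so $g_1$ can start its clockwise trip. An analogous local argument, using Lemma~\ref{lem:b1g1-degree} together with the successive clearing of the neighborhoods of $g_1$, then $g_2$, then $b_1$, propagates the cascade through $g_2$, $b_1$, and $b_2$.

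The hard part will be the case analysis itself: for each top/bottom subcase of Figure~\ref{fig:reconf-b2g2degree2}, and for each of the bridge layouts of Lemma~\ref{lem:bridge} corresponding to distances $d \in \{2,3,4,5,6\}$, one must check that the musketeer modules lie far enough from $m$ and its neighbors not to block the clockwise pivots of $m, g_1, g_2, b_1, b_2$ along the outer shell. This is a finite but not entirely mechanical verification, and it is what justifies that the five departures really do execute without the bridge gadget getting in the way; I would present it by fixing a bridge layout per subcase and tracing the outer shell explicitly.

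For the accounting, bridging costs $O(n)$ pivots and each of the five departing modules travels $O(n)$ outer-shell edges, giving an overall $O(n)$ move count. Facet-connectivity is maintained throughout because the bridge provides an alternative connection between the two components during and after each departure. Finally, the new maximum potential is strictly smaller than $\Phi_{max}$: the unique realizer $m$ of $\Phi_{max}$ has left $C$, and the bridge musketeers occupy cells of potential strictly below $\Phi(m)$; since $\Phi_{min}$ is unchanged by bridging and by the departure of modules that do not realize $\Phi_{min}$, the potential gap $\Delta\Phi$ strictly decreases, as required.
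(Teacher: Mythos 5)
Your proof is correct and follows essentially the same route as the paper's: invoke Lemma~\ref{lem:bridge} for the $O(n)$-move bridge that preserves $\Phi_{\max}$ and $\Phi_{\min}$, then let $m$ followed by $g_1,g_2,b_1,b_2$ peel off to the strip along the outer shell, with connectivity kept by the bridge left in place and $\Delta\Phi$ dropping because the unique maximum-potential module $m$ departs. The paper's proof is terser and defers the non-interference of the bridge gadget with the cascade of pivots entirely to Figure~\ref{fig:reconf-b2g2degree2}; your explicit acknowledgement that this case analysis is the real work is a fair observation, though one small misattribution is worth fixing: Lemma~\ref{lem:righthand} governs the subsequent clockwise traversal, whereas $m$ becoming \emph{outer-free} follows from the bridge making it non-cut together with the emptiness of its NE-quadrant cells, which is forced by $m$ having maximal potential.
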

\begin{proof}
	By Lemma~\ref{lem:bridge}, the bridging procedure uses at most five \aux\ modules and $O(n)$ pivoting operations. The reconfiguration itself uses $O(n)$ pivoting operations to send five modules ($m$, $b_1$, $b_2$, $g_1$ and $g_2$) to the canonical strip. The resulting configuration is \facet-connected because each of the blue and green components are \facet-connected, and the bridge (which remains in place) \facet-connects to each of them. Since at most five \aux\ modules join $C$ and exactly five others leave $C$, the size of $C$ stays the same. However, since $m$ leaves $C$, the maximum potential of $C$ decreases. Note that the minimum potential of $C$ stays the same, therefore the potential gap $\Delta\Phi$ decreases.
\end{proof}

\noindent
It remains to discuss the situations when at least one of $b_2$ and $c_2$ has degree strictly greater than $2$.

\paragraph{Reconfiguring when $g_2$ has degree greater than 2.}

If $g_2$ does not have a West neighbor, then $g_2$ has East and South neighbors and is of degree $3$; see Figure~\ref{fig:reconf-g2degree3} (a). Note however that in this case the East neighbor of $g_2$ is outer-free (because the positions above and to its right have potential higher than that of $m$ and are therefore empty), contradicting the fact that $C$ has no outer-free modules. This implies that $g_2$ has a West neighbor. In this case we invoke the bridging procedure to connect the green and blue components joined by the cut vertex $m$ with a set of \aux\ modules, so that $m$ becomes outer-free and can pivot clockwise to join the canonical strip.
Recall that we are in the context where $b_1$ has degree $2$ before reconfiguration (Lemma~\ref{lem:b1g1-degree} guarantees that the degree of $b_2$ is at most $2$, and the case with $b_1$ of degree $1$ has been handled above), so $b_1$ has degree $1$ after $m$ rolls away. By Lemma~\ref{lem:b1g1-degree}, the neighbor $b_2$ of $b_1$ lies to the North or West of $b_1$, as indicated by the two configurations from Figures~\ref{fig:b1g1-degree} (b.2--b.3).

Assume first that $b_2$ lies North of $b_1$. Refer to Figure~\ref{fig:reconf-g2degree3} (b). We discuss two situations, depending on whether the position $b' = b_1 +(-1,-1)$ is empty or not. If position $b'$ is empty, then a sequence of pivoting operations reconnects the green and blue components in the vicinity of $b_1$ as follows. 
First, $b_1$ pivots counterclockwise to attach to $g_1$; see Figure~\ref{fig:reconf-g2degree3} (c.1). 
Second, $b_2$ pivots clockwise to attach on top of $b_1$; see  Figure~\ref{fig:reconf-g2degree3} (c.2). 
Finally, $g_1$ pivots counterclockwise twice to attach North of $b_2$; see  Figure~\ref{fig:reconf-g2degree3} (c.3). The result is shown in Figure~\ref{fig:reconf-g2degree3} (c.4).

\begin{figure}[tbh]
	\centering
	\includegraphics[page=32,width=\textwidth]{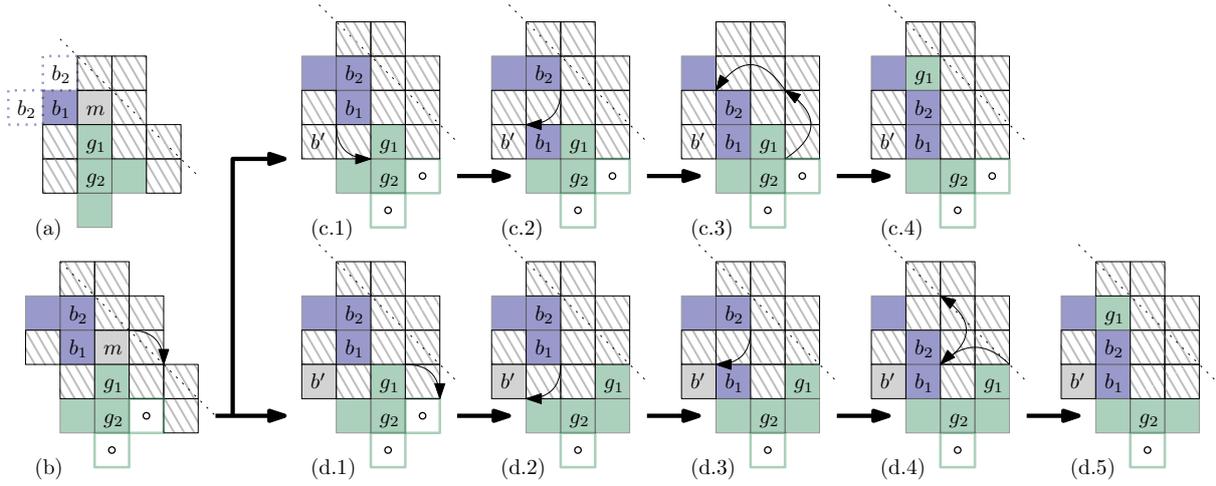}
	\caption{Reconfiguring when $g_2$ has degree greater than $2$ (a) $g_2$ does not have a West neighbor (b)  $g_2$ has a West neighbor and $b_2$ lies North of $b_1$.}
	\label{fig:reconf-g2degree3}
\end{figure}

If the position $b'$ is occupied, then $b_1$ is blocked; see Figure~\ref{fig:reconf-g2degree3} (d.1). In this case $g_1$ pivots clockwise to free the space South-East of $b_1$. Figure~\ref{fig:reconf-g2degree3} (d.2) shows the case when $g_1$ attaches to the East neighbor of $g_2$, but the argument holds for the case when $g_2$ does not have an East neighbor (in this case $g_1$ would attach East of $g_2$). With $g_1$ out of the way, $b_1$ can now pivot clockwise and attach East of $b'$, followed by $b_2$ which pivots clockwise to attach North of $b_1$; see Figure~\ref{fig:reconf-g2degree3} (d.3). Finally, $g_1$ reverses its pivoting step back to its original position, then pivots counterclockwise once more to attach North of $b_2$; see Figure~\ref{fig:reconf-g2degree3} (d.4). The result is shown in Figure~\ref{fig:reconf-g2degree3} (d.5).
\begin{figure}[tbh]
	\centering
	\includegraphics[page=33,width=\textwidth]{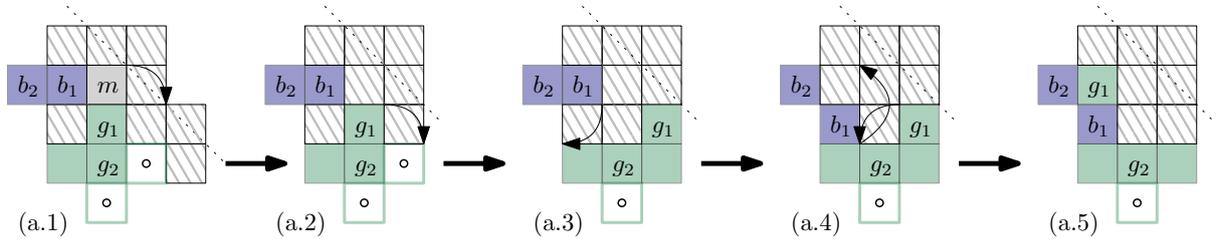}
	\caption{Reconfiguring when $g_2$ has degree greater than $2$, $g_2$ has a West neighbor and $b_2$ lies West of $b_1$.}
	\label{fig:reconf-g2degree3-2}
\end{figure}
Assume now that $b_2$ lies West of $b_1$; see Figure~\ref{fig:reconf-g2degree3-2} (a.1). In this case, after $m$ rolls away to join the canonical strip, a sequence of pivoting operations reconnects the green and blue components in the vicinity of $b_1$ as follows. 
First, $g_1$ pivots clockwise; see Figure~\ref{fig:reconf-g2degree3-2} (a.2). 
Second, $b_1$ pivots clockwise; see Figure~\ref{fig:reconf-g2degree3-2} (a.3). 
Finally, $g_1$ reverses its pivoting step back to its original position, then pivots counterclockwise once more to attach North of $b_1$; see Figure~\ref{fig:reconf-g2degree3-2} (a.4). The result is shown in Figure~\ref{fig:reconf-g2degree3-2} (a.5).

In all these cases, the green and blue components remain connected after the \aux\ modules retrace their steps to join the canonical strip.

\begin{lemma}\label{lem:r4}
	This reconfiguration step uses $O(n)$ pivoting operations to transform $C$ into \afacet-connected configuration of smaller size and smaller potential gap $\Delta\Phi$.
\end{lemma}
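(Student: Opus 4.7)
The plan is to mirror the structure already used for Lemmas~\ref{lem:r1}, \ref{lem:r2}, and \ref{lem:r3}: invoke the bridging procedure, let $m$ pivot along the outer shell into the canonical strip, execute the constant-size local rewiring prescribed by Figures~\ref{fig:reconf-g2degree3} and \ref{fig:reconf-g2degree3-2}, and finally send the bridging \auxes\ back along the outer shell to the strip. The bookkeeping on potential and size then follows from the observation that $m$ (the unique maximum-potential module) leaves $C$ while the \auxes\ that joined $C$ during bridging eventually depart again.

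First, I would justify that the case split is exhaustive. Lemma~\ref{lem:b1g1-degree} gives that $b_1$ has degree at most $2$, and the case $\deg(b_1)=1$ is already handled by Lemma~\ref{lem:r1}; so $b_2$ exists and is either North or West of $b_1$. The degree-$1$ subcases for $b_2$ and $g_2$ are covered by Lemma~\ref{lem:r2}, and the all-degree-$2$ case by Lemma~\ref{lem:r3}, so it remains to analyze $\deg(g_2) \geq 3$. I would first argue, as in the text, that the absence of a West neighbor of $g_2$ forces an outer-free module (its East neighbor), which contradicts our precondition; hence $g_2$ must have a West neighbor. This puts us into one of the subcases in Figures~\ref{fig:reconf-g2degree3}(b)--(d) and \ref{fig:reconf-g2degree3-2}(a), distinguished by the position of $b_2$ and by whether $b' = b_1+(-1,-1)$ is occupied.

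Next, for each subcase, I would verify the two things that are not automatic from the figures: (i) every pivot in the prescribed local sequence is legal under Set~3, i.e., the target and intermediate cells are empty and the remaining configuration stays \facet-connected throughout, and (ii) the final state is \facet-connected. Part (i) follows because the relevant cells sit strictly above or strictly to the right of positions that are constrained to be empty by the definition of $m$ and by the absence of outer-free modules. Part (ii) is the crucial structural claim: once $m$ has rolled away, the bridge produced by Lemma~\ref{lem:bridge} keeps the green and blue components connected, and the local rewiring at $b_1$ (cases (c)--(d) of Figure~\ref{fig:reconf-g2degree3} and case (a) of Figure~\ref{fig:reconf-g2degree3-2}) introduces an independent direct connection between green and blue that persists after the \auxes\ retrace their steps. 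After the \auxes\ leave, the configuration is exactly the original one with $m$ removed and the local rewiring applied, and this rewiring only replaces one green-blue adjacency (at $m$) with another.

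Finally, the arithmetic: the bridging uses at most five \auxes\ and $O(n)$ pivots (Lemma~\ref{lem:bridge}), $m$ pivots along the outer shell in $O(n)$ operations, the local rewiring takes $O(1)$ pivots, and each of the at most five \auxes\ retraces along the outer shell in $O(n)$ pivots, giving $O(n)$ total. Since the \auxes\ enter and then leave, the net change in the size of $C$ is exactly the departure of $m$ to the canonical strip, so $|C|$ strictly decreases; and because $m$ carried the (unique) maximum potential while $\Phi_{\min}$ is unchanged (the rewiring places modules at positions with potential no larger than $\Phi(m)$ and no smaller than $\Phi_{\min}$), the potential gap $\Delta\Phi$ strictly decreases as well. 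The main obstacle I expect is the case analysis in part (ii): verifying, particularly when $b'$ is occupied in Figure~\ref{fig:reconf-g2degree3}(d), that $g_1$'s temporary detour does not transiently disconnect $C$ and that the \auxes\ can truly retrace their outer-shell walk without being blocked by the new configuration at $b_1$.
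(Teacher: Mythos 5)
Your proposal follows exactly the same structure as the paper's own proof of Lemma~\ref{lem:r4}: invoke bridging by Lemma~\ref{lem:bridge} ($O(n)$ pivots, at most five musketeers), roll $m$ along the outer shell in $O(n)$ pivots, execute the constant-size local rewiring prescribed by Figures~\ref{fig:reconf-g2degree3} and~\ref{fig:reconf-g2degree3-2}, retrace the musketeers, and observe that only $m$ permanently leaves $C$ so that size and $\Delta\Phi$ both strictly decrease (the rewiring placing $b_1,b_2,g_1$ only at cells of potential strictly less than $\Phi(m)$). The additional care you take to justify exhaustiveness of the subcase split and to flag $\facet$-connectivity of every intermediate state is exactly what the paper's brief proof leaves to the reader, and it is the right concern; no substantive deviation from the paper's argument.
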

\begin{proof}
	By Lemma~\ref{lem:bridge}, the bridging procedure (and its reverse) takes $O(n)$ pivoting operations. In addition to the $O(n)$ pivoting operations used by $m$ to join the canonical strip, this reconfiguration step uses only a constant number of pivoting operations. The resulting configuration is \facet-connected, because each of the blue and green components are connected, and this reconfiguration step connects the blue and green components together, as shown in the right columns of
	Figures~\ref{fig:reconf-g2degree3} and~\ref{fig:reconf-g2degree3-2}.
	Because the \aux\ modules rejoin the canonical strip, the size of $C$ decreases. The maximum potential of $C$ also decreases (because $m$ leaves $C$) and the minimum potential of $C$ stays the same, therefore the potential gap $\Delta\Phi$ decreases.
\end{proof}

\paragraph{Reconfiguring when $b_2$ has degree greater than 2.}

By Lemma~\ref{lem:b1g1-degree}, the neighbor $b_2$ of $b_1$ lies North or West of $b_1$, as indicated by the two configurations from Figures~\ref{fig:b1g1-degree} (b.2--b.3). Note however that, if $b_2$ lies North of $b_1$, then the positions North and East of $b_2$ are empty (since their potential is higher than that of $m$), which implies that $b_2$ has degree $2$. So the only situation left to discuss here is when $b_2$ lies West of $b_1$.

\begin{figure}[tbh]
	\centering
	\includegraphics[page=34,width=0.95\textwidth]{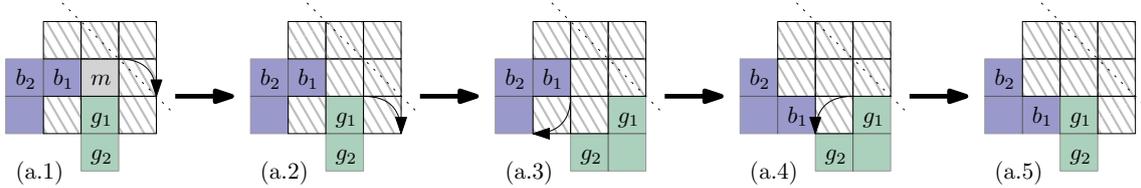}
	\caption{Reconfiguring when $b_2$ has degree greater than $2$ and a South neighbor.}
	\label{fig:reconf-b2degree3}
\end{figure}

Assume first that $b_2$ has a South neighbor; see Figure~\ref{fig:reconf-b2degree3}. 
This case is very similar to the one where $g_2$ has a West neighbor (discussed in the previous section and depicted in Figure~\ref{fig:reconf-g2degree3-2}), and the same sequence of operations applies here as well: after bridging, $m$ pivots clockwise along the outer shell to join the canonical line; $g_1$ and $b_1$ pivot clockwise, in this order; then $g_1$ pivots counterclockwise (the only difference is that $g_1$ stops after the first pivoting step). This sequence of pivoting steps is depicted in Figure~\ref{fig:reconf-b2degree3}.

\begin{figure}[h]
	\centering
	\includegraphics[page=35,width=0.94\textwidth]{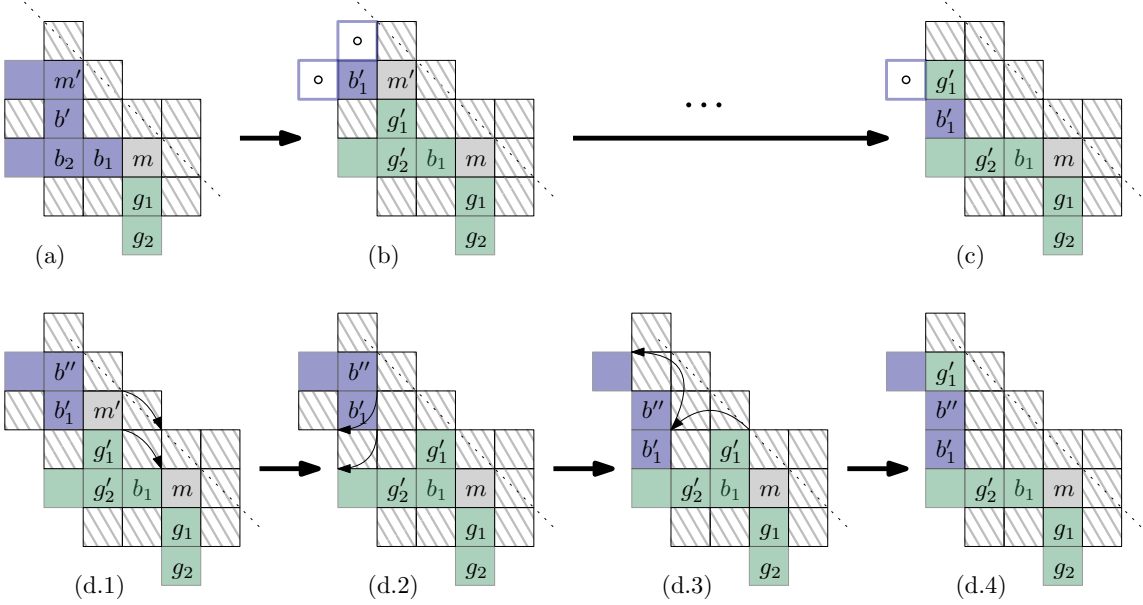}
	\caption{Reconfiguring when $b_2$ has degree greater than $2$ but no South neighbor.}
	\label{fig:reconf-b2degree3-2}
\end{figure}

Assume now that $b_2$ does not have a South neighbor. Since the degree of $b_2$ is at least $3$, $b_2$ must have West and North neighbors.
Refer to Figure~\ref{fig:reconf-b2degree3-2} (a).
Let $b'$ be the North neighbor of $b_2$ and note that the position $m'$ North of $b'$ must be occupied, otherwise $b'$ would be outer-free. Also note that the position West of $m'$ must be occupied and the position West of $b'$ must be empty, otherwise $m'$ would be outer-free. We reassign the role of $m$ to $m'$ and recolor the graph so that its blue and green components are joined by $m'$; see Figure~\ref{fig:reconf-b2degree3-2} (b). Let $b'_1$ and $g'_1$ be the blue and green neighbors of $m'$.
If $b'_1$ does not have a North neighbor, we are in a situation similar to the one depicted in Figure~\ref{fig:reconf-g2degree3-2} and handle it the same way: bridge the (new) blue and green components;
pivot $m'$ clockwise along the outer shell to join the canonical line;
pivot $g'_1$ and $b'_1$ clockwise, in this order;
then pivot $g'_1$ counterclockwise twice.
The result is shown in Figure~\ref{fig:reconf-b2degree3-2} (c).

We use a similar sequence of pivoting steps for the case when $b'_1$ has a North neighbor $b''$; see Figure~\ref{fig:reconf-b2degree3-2} (d.1).
First note that the position North of $b''$ is empty, since it has potential higher than the one of $m$. Also note that the position West of $b''$ is occupied and the position West of $b'_1$ is empty, otherwise $b''$ would be outer-free. This enables the following sequence of operations. 
First, we bridge the (new) blue and green components. 
Second, we pivot $m'$ clockwise along the outer shell to join the canonical line. 
Then, we pivot $g'_1$, $b'_1$ and $b''$ clockwise, in this order; see Figure~\ref{fig:reconf-b2degree3-2} (d.2). 
Finally, we pivot $g'_1$ counterclockwise thrice; see Figure~\ref{fig:reconf-b2degree3-2} (d.3).
The result is shown in Figure~\ref{fig:reconf-b2degree3-2} (d.4).

In all these cases, the green and blue components remain connected after the \aux\ modules retrace their steps to join the canonical line.

\begin{lemma}\label{lem:r5}
	This reconfiguration step uses $O(n)$ pivoting operations to transform $C$ into \afacet-connected configuration of smaller size
	and smaller or equal potential  gap $\Delta\Phi$.
\end{lemma}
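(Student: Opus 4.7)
The plan is to verify the four claims of the lemma---pivoting-operation count, facet-connectivity, size decrease, and potential-gap bound---across the two sub-cases identified in the text preceding the statement: (i) $b_2$ has a South neighbor, and (ii) $b_2$ has no South neighbor, in which case the role of $m$ is reassigned to an auxiliary module $m'$. The case where $b_2$ lies North of $b_1$ has already been ruled out, because the positions North and East of such a $b_2$ would have potential strictly greater than $\Phi(m)$ and are therefore empty, forcing $b_2$ to have degree at most $2$.

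For the operation count, I would invoke Lemma~\ref{lem:bridge} to bound the bridging procedure by $O(n)$ pivots using at most five \aux\ modules, then add the $O(n)$ pivots used to route $m$ (respectively $m'$) along the outer shell to the canonical strip, the constant number of local reconfiguration pivots shown in Figures~\ref{fig:reconf-b2degree3} and~\ref{fig:reconf-b2degree3-2}, and the $O(n)$ pivots used by each \aux\ module to retrace its path back to the strip. The total is $O(n)$.

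For facet-connectivity of the output, I would observe that the green and blue components are each internally connected throughout---no internal module ever moves in a way that would disconnect them---and that the only cut vertex joining them in $C$ was $m$ (or $m'$). In each sub-case the figure exhibits an explicit replacement connection: via the repositioned $g_1,b_1$ in Case (i), and via the repositioned $g'_1,b'_1$ and possibly $b''$ in Case (ii). These connections survive after the \aux\ modules retrace their paths. The size strictly decreases by one in both cases, since a single module ($m$ or $m'$) leaves $C$ permanently while the \aux\ modules enter and then depart in equal numbers.

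The subtlest point, and the reason for the weaker ``smaller or equal'' in the potential-gap bound, is Case (ii): there only $m'$ leaves the configuration, whereas the true maximum-potential module $m$ remains in $C$ (one checks from $\Phi(m)=(x+y,x)$ that $\Phi(m)>\Phi(m')$ lexicographically, since $m'$ has the same $x+y$ as $m$ but strictly smaller $x$). Consequently $\Phi_{\max}$ may stay the same, and $\Delta\Phi$ need not strictly decrease. In Case (i), by contrast, $m$ itself leaves, so $\Phi_{\max}$ strictly decreases while $\Phi_{\min}$ is untouched by the local surgery. I expect the main obstacle to be the bookkeeping in Case (ii): I must verify that the cells forming the reassigned local structure around $m'$---occupancy of $m'$ and of the West neighbor of $m'$, emptiness of the cell West of $b'$, and, when $b'_1$ has a North neighbor $b''$, emptiness of the cell North of $b''$ together with occupancy of the cell West of $b''$---are all forced by the no-outer-free-module hypothesis, so that the reconnection sequence in Figure~\ref{fig:reconf-b2degree3-2} executes without collision and leaves the final configuration facet-connected.
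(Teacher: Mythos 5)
Your proposal is correct and matches the paper's approach: invoke \Cref{lem:bridge} for the $O(n)$ bridging cost, add the $O(n)$ outer-shell traversal by $m$ or $m'$ plus a constant number of local pivots and $O(n)$ for the \aux\ modules to retrace, then argue connectivity from the explicit replacement connections in Figures~\ref{fig:reconf-b2degree3} and~\ref{fig:reconf-b2degree3-2}, size decrease because only $m$ or $m'$ leaves while the \auxes\ enter and depart in equal number, and a non-increasing potential gap. Your explicit computation $\Phi(m')=(x+y,x-2)<\Phi(m)=(x+y,x)$ makes precise why the gap may stay fixed in the $m'$ sub-case, a point the paper's proof states but does not spell out, so your write-up is in fact slightly more informative than the original.
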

\begin{proof}
	By Lemma~\ref{lem:bridge}, the bridging procedure (and its reverse) takes $O(n)$ pivoting operations. In addition to the $O(n)$ pivoting operations used by $m$ or $m'$ to join the canonical strip, this reconfiguration step uses only a constant number of pivoting operations.
	The resulting configuration is \facet-connected, because each of the blue and green components are connected, and this reconfiguration step connects the blue and green components together, as shown in the right columns of Figures~\ref{fig:reconf-b2degree3} and~\ref{fig:reconf-b2degree3-2}.
	Note that the size of $C$ decreases, because the \aux\ modules rejoin the canonical strip.
	In both cases the potential gap does not increase.
\end{proof}

\subsection{Algorithm pseudocode}
Algorithm~\ref{alg} solves the reconfiguration problem by combining the operations described in the previous sections:

\begin{algorithm}[h]
	\KwData{An arbitrary \facet-connected configuration $C$ with $n$ modules}
	\KwResult{A canonical strip of modules of length $n$}
	\While{there are still modules in $C$}{
		\While{there exist outer-free modules}{
			pick one outer-free module and pivot it all the way to the tip of the strip\;
		}
		\If{the strip has fewer than five modules}{
			make the strip five modules long by adding \aux\ modules\;
		}
		invoke the reconfiguration step\;
	}
	\caption{Reconfiguring an arbitrary \facet-connected configuration into a canonical strip.}
	\label{alg}
\end{algorithm}

\begin{theorem}
	The reconfiguration algorithm (Algorithm~\ref{alg})
	transforms \afacet-connected configuration $C$ with $n$ modules
	into a canonical strip of the same size,
	using $O(n^2)$ monkey-move pivoting steps, which is worst-case optimal, and
	adding at most five extra modules.
\end{theorem}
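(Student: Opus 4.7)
The plan is to decompose the theorem into four claims---correctness/termination, the $O(n^2)$ upper bound on pivots, worst-case optimality, and the five-musketeer bound---and to establish each by assembling the lemmas already in place. Correctness and the pivot bound will follow from a potential-function argument combined with an amortized ``strip-entry'' count; the lower bound will be an earth-mover argument; the musketeer bound will be a direct inspection of the reconfiguration cases together with Figure~\ref{fig:tight-bound}.

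First I would prove that Algorithm~\ref{alg} halts and outputs the canonical strip, using a lexicographic potential $(|C|,\Delta\Phi)$ on the non-strip configuration $C$. The inner while loop strictly decreases $|C|$ by one per iteration, so it terminates in at most $|C|$ iterations. When it exits, no module of $C$ is outer-free, and Lemma~\ref{lem:b1g1-degree} guarantees that the neighborhood of the maximum-potential module $m$ of $C$ falls into one of the configurations handled by Lemmas~\ref{lem:r1}--\ref{lem:r5}; this is the step I would verify most carefully, to be sure the case analysis is genuinely exhaustive. Lemmas~\ref{lem:r1},~\ref{lem:r2},~\ref{lem:r4} strictly decrease $|C|$; Lemma~\ref{lem:r3} keeps $|C|$ constant but strictly decreases $\Delta\Phi$; Lemma~\ref{lem:r5} decreases $|C|$ and does not increase $\Delta\Phi$. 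Hence every outer iteration strictly reduces $(|C|,\Delta\Phi)$ lexicographically, and since both coordinates are nonnegative integers bounded by $O(n)$, the algorithm halts. Facet-connectivity is preserved because each invoked lemma asserts it explicitly.

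Next I would bound the total number of pivots. Every time a module moves to the strip it travels along the outer shell in $O(n)$ pivots by Lemma~\ref{lem:righthand}. I would split the pivots into two categories: outer-free evacuations in the inner loop, and moves performed inside a reconfiguration step (including any bridging and its reverse, both of which cost $O(n)$ by Lemma~\ref{lem:bridge}). A regular module enters the strip at most once, so there are at most $n$ evacuations of the first kind, costing $O(n^2)$ in total. For the second category, each reconfiguration step sends at least one regular module to the strip (Lemmas~\ref{lem:r1}--\ref{lem:r5}), so there are at most $n$ such steps, each of cost $O(n)$, for another $O(n^2)$ pivots. Summing yields $O(n^2)$.

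Finally, for worst-case optimality I would use a standard earth-mover argument: reconfiguring a $1\times n$ horizontal strip into an $n\times 1$ vertical one forces $\Omega(n)$ modules to shift by $\Omega(n)$ lattice positions each, while a single pivot displaces one module by $O(1)$ in $L_1$, so $\Omega(n^2)$ pivots are necessary. Sufficiency of five musketeers is immediate from the inspection of Lemmas~\ref{lem:r1}--\ref{lem:r5} together with Lemma~\ref{lem:bridge}, as no sub-procedure ever requires more than five helpers simultaneously; necessity is witnessed by the rigid configuration of Figure~\ref{fig:tight-bound}. The main obstacle I anticipate is the bookkeeping in the amortized count: I must be sure that musketeer round-trips are not double-counted across successive bridgings, and that the interplay between the inner outer-free loop and the reconfiguration step (which may newly expose outer-free modules, feeding back into the inner loop) is handled so that the overall charge remains $O(n^2)$.
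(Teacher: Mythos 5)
Your decomposition and tools match the paper's proof closely: the same potential function $\Phi$, the same case structure handled via Lemmas~\ref{lem:r1}--\ref{lem:r5} and Lemma~\ref{lem:bridge}, and the same strip-to-strip earth-mover lower bound. One intermediate claim is wrong as stated, though, and it is load-bearing in your pivot count: ``a regular module enters the strip at most once'' and hence ``each reconfiguration step sends at least one regular module to the strip, so there are at most $n$ such steps.'' Modules \emph{can} re-enter the strip multiple times: in the case of Lemma~\ref{lem:r3}, five modules are evacuated to the strip while the five musketeers stay in $C$ as the bridge; those musketeers were themselves taken from the tip of the strip, and may later be evacuated to the strip again. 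So bare module-accounting does not bound the number of reconfiguration steps. The bound that does work is the one you already set up but didn't carry through to the pivot count: each reconfiguration step strictly decreases the lexicographic pair $(|C|,\Delta\Phi)$, and --- crucially --- \emph{neither} coordinate ever increases (Lemmas~\ref{lem:r1}--\ref{lem:r5}), so the number of reconfiguration steps is at most $|C|_{\mathrm{init}}+\Delta\Phi_{\mathrm{init}}=O(n)$, and the total number of inner-loop evacuations is at most $|C|_{\mathrm{init}}=O(n)$ because $|C|$ decreases by one per evacuation and never goes back up. With $O(n)$ pivots per iteration via Lemmas~\ref{lem:righthand} and~\ref{lem:bridge}, this gives $O(n^2)$, which is exactly the paper's argument. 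One further point you gesture at but should make explicit: the ``at most five extra modules'' bound is not just ``bridging uses at most five helpers''; it requires that the padding step in Algorithm~\ref{alg} fires at most once, which follows from $|C|$ being non-increasing and hence the strip length being non-decreasing, as the paper spells out.
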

\begin{proof}
	The input to the algorithm is a configuration $C$ of size $n$ and potential gap $\Delta\Phi =O(n)$. Each step of the innermost loop uses $O(n)$ pivoting
	operations to take an outer-free module to the end of the strip, thus decreasing the size of $C$ by one.
	Each reconfiguration step uses $O(n)$ pivoting steps to
	decrease either the potential gap
	or the size of $C$, leaving it \facet-connected (and never increasing the potential gap).
	Because the size of $C$ never increases, the length of the canonical strip never decreases. This means that
	the strip can have fewer than five modules only once and the conditional does not affect the complexity of the algorithm. 
	We conclude that the algorithm terminates after $O(n)$ iterations in total.
	Because each iteration takes $O(n)$ pivoting steps, the total number of pivoting steps is $O(n^2)$.
	Optimality comes from the $\Omega(n^2)$ pivoting steps required to reconfigure a vertical strip into a horizontal one.
\end{proof}

\section{Conclusion and open problems}\label{sec:conclusions}
This paper addresses the problem of reconfiguring \afacet-connected grid configuration of $n$ modules into any other configuration of $n$ modules under three increasingly more flexible sets of pivoting moves, namely restrictive, leapfrog and monkey.
Previous results solve this problem under the leapfrog set of moves, as long as the initial and final configurations satisfy a strong local separating condition imposed by three forbidden patterns. We
show that there exist robot configurations with many instances of the three forbidden patterns that are still reconfigurable, so the local separation condition is not necessary. On the other hand, we show that as soon as the local separation condition is relaxed, the reconfiguration graph breaks into an exponential number of connected components of exponential size. To overcome this obstacle we introduce a new pivoting move, called monkey, and a natural reconfiguration approach that does not depend on local features, but uses up to five extra modules that can freely move around the boundary of the robot configuration. These extra modules are used to unlock intermediate locked configurations so that progress can be made towards the target configuration. We show that our approach uses $O(n^2)$ monkey-pivoting moves to reconfigure any source configuration with $n$ pivoting modules into any given target configuration.

We leave open the question of whether universal reconfiguration can be accomplished under the more restrictive set of leapfrog pivoting moves using a constant number of extra modules.

Another question is whether our approach generalizes to three or higher dimensions.
For example, when the slice graphs (where the vertices are the \emph{slices} of the configuration cut along an axis and the edges connect slices with \facet-adjacent modules) of the source and target configurations are both paths, we should be able to reconfigure each to a strip of modules, one slice at a time, similar to our 2-dimensional approach does. We conjecture that a similar approach will also work for general 3-dimensional configurations, potentially after increasing the number of \aux\ modules to bridge larger gaps introduced by the higher dimensionality.

\bibliography{Pivoting-facet-adjacent}

\begin{thebibliography}{10}

\bibitem{pushing-cubes}
Z.~Abel and S.~D. Kominers.
\newblock Pushing hypercubes around.
\newblock {\em CoRR}, abs/0802.3414, 2008.
\newblock \href {http://arxiv.org/abs/0802.3414} {\path{arXiv:0802.3414}}.

\bibitem{EMCube}
B.~K. An.
\newblock {EM-Cube}: cube-shaped, self-reconfigurable robots sliding on
  structure surfaces.
\newblock In {\em Proc. IEEE International Conference on Robotics and
  Automation (ICRA)}, pages 3149--3155, 2008.

\bibitem{heuristics-square}
N.~Ayanian, P.~J. White, {\'A}.~H{\'a}l{\'a}sz, M.~Yim, and V.~Kumar.
\newblock Stochastic control for self-assembly of {XBots}.
\newblock In {\em Proceedings of the ASME International Design Engineering
  Technical Conferences and Computers and Information in Engineering
  Conference}, 2008.

\bibitem{nadia}
N.~M. Benbernou.
\newblock {\em Geometric algorithms for reconfigurable structures}.
\newblock PhD thesis, Massachusetts Institute of Technology, 2011.

\bibitem{survey2017}
S.~Chennareddy, A.~Agrawal, and A.~Karuppiah.
\newblock Modular self-reconfigurable robotic systems: a survey on hardware
  architectures.
\newblock {\em Journal of Robotics}, 2017(5013532), 2017.

\bibitem{Metamorphic}
G.~S. Chirikjian.
\newblock Kinematics of a metamorphic robotic system.
\newblock In {\em Proc. IEEE International Conference on Robotics and
  Automation (ICRA)}, volume~1, pages 449--455, 1994.

\bibitem{pushing-squares}
A.~Dumitrescu and J.~Pach.
\newblock Pushing squares around.
\newblock {\em Graphs and Combinatorics}, 22(1):37--50, 2006.

\bibitem{MeltGrow}
R.~Fitch, Z.~Butler, and D.~Rus.
\newblock Reconfiguration planning for heterogeneous self-reconfiguring robots.
\newblock In {\em Proc. IEEE/RSJ International Conference on Intelligent Robots
  and Systems (IROS)}, volume~3, pages 2460--2467, 2003.

\bibitem{right-hand}
A.~Hemmerling.
\newblock {\em Labyrinth problems -- labyrinth-searching abilities of
  automata}, volume~14 of {\em Teubner-Texte zur Mathematik (TTZM)}.
\newblock Springer-Verlag, 1989.

\bibitem{3DFractum}
H.~Kurokawa, S.~Murata, E.~Yoshida, K.~Tomita, and S.~Kokaji.
\newblock A {3-D} self-reconfigurable structure and experiments.
\newblock In {\em Proc. IEEE/RSJ International Conference on Intelligent Robots
  and Systems (IROS)}, volume~2, pages 860--865, 1998.

\bibitem{heuristics}
T.~Larkworthy and S.~Ramamoorthy.
\newblock A characterization of the reconfiguration space of self-reconfiguring
  robotic systems.
\newblock {\em Robotica}, 29(1):73--85, 2011.

\bibitem{PSPACE-sliding-corners}
O.~Michail, G.~Skretas, and P.~G. Spirakis.
\newblock On the transformation capability of feasible mechanisms for
  programmable matter.
\newblock {\em J. Comput. Syst. Sci.}, 102:18--39, 2019.

\bibitem{Fractum}
S.~Murata, H.~Kurokawa, and S.~Kokaji.
\newblock Self-assembling machine.
\newblock In {\em Proc. IEEE International Conference on Robotics and
  Automation (ICRA)}, volume~1, pages 441--448, 1994.

\bibitem{M-tran}
S.~Murata, E.~Yoshida, A.~Kamimura, H.~Kurokawa, K.~Tomita, and S.~Kokaji.
\newblock {M-TRAN}: self-reconfigurable modular robotic system.
\newblock {\em IEEE/ASME Transactions on Mechatronics}, 7(4):431--441, 2002.

\bibitem{density}
A.~Nguyen, L.~J. Guibas, and M.~Yim.
\newblock Controlled module density helps reconfiguration planning.
\newblock In {\em Algorithmic and Computational Robotics: New Dimensions
  (WAFR)}, pages 23--25. A. K. Peters, 2001.

\bibitem{Atron}
E.~H. {\O}stergaard, K.~Kassow, R.~Beck, and H.~H. Lund.
\newblock Design of the {ATRON} lattice-based self-reconfigurable robot.
\newblock {\em Autonomous Robots}, 21(2):165--183, 2006.

\bibitem{Crystal}
D.~Rus and M.~Vona.
\newblock A physical implementation of the self-reconfiguring crystalline
  robot.
\newblock In {\em Proc. IEEE International Conference on Robotics and
  Automation (ICRA)}, volume~2, pages 1726--1733, 2000.

\bibitem{SuperBot}
B.~Salemi, M.~Moll, and W.-M. Shen.
\newblock {SUPERBOT}: a deployable, multi-functional, and modular
  self-reconfigurable robotic system.
\newblock In {\em Proc. IEEE/RSJ International Conference on Intelligent Robots
  and Systems (IROS)}, pages 3636--3641, 2006.

\bibitem{self-reconfigurable}
K.~Stoy, D.~Brandt, and D.~J. Christensen.
\newblock {\em Self-reconfigurable robots: an introduction}.
\newblock MIT Press, 2010.

\bibitem{M-blocks}
C.~Sung, J.~Bern, J.~Romanishin, and D.~Rus.
\newblock Reconfiguration planning for pivoting cube modular robots.
\newblock In {\em Proceedings of the IEEE International Conference on Robotics
  and Automation (ICRA)}, pages 1933--1940, 2015.

\bibitem{ICubes}
C.~Unsal, H.~Kiliccote, and P.~Khosla.
\newblock {I(CES)-Cubes}: a modular self-reconfigurable bipartite robotic
  system.
\newblock In {\em Proc. SPIE Conference on Mobile Robots and Autonomous
  Systems}, volume 3839, pages 258--269. SPIE, 1999.

\bibitem{survey2007}
M.~Yim, W.~Shen, B.~Salemi, D.~Rus, M.~Moll, H.~Lipson, E.~Klavins, and G.~S.
  Chirikjian.
\newblock Modular self-reconfigurable robot systems.
\newblock {\em IEEE Robotics \& Automation Magazine}, 14(1):43--52, 2007.

\bibitem{Molecube}
V.~Zykov, A.~Chan, and H.~Lipson.
\newblock Molecubes: an open-source modular robotic kit.
\newblock In {\em IROS-2007 Self-Reconfigurable Robotics Workshop}, 2007.

\end{thebibliography}

\end{document}